\newtheorem*{remark}{Remark}
\newtheorem{definition}{Definition}
\newtheorem{theorem}{Theorem}
\begin{document}
	%
	\title{CHOP: Bypassing Runtime Bounds Checking Through Convex Hull OPtimization}

	
	\author{\IEEEauthorblockN{Yurong Chen,
			Hongfa Xue,
			Tian Lan,
			Guru Venkataramani}
}

	
	%



	\IEEEtitleabstractindextext{%
		
		\begin{abstract}
			Unsafe memory accesses in programs written using popular programming languages like C/C++ have been among the leading causes for software vulnerability. Prior memory safety checkers such as SoftBound enforce memory spatial safety by checking if every access to array elements are within the corresponding array bounds. However, it often results in high execution time overhead due to the cost of executing the instructions associated with bounds checking. To mitigate this problem, redundant bounds check elimination techniques are needed. In this paper, we propose CHOP, a Convex Hull OPtimization based framework, for bypassing redundant memory bounds checking via profile-guided inferences. In contrast to existing check elimination techniques that are limited by static code analysis, our solution leverages a model-based inference to identify redundant bounds checking based on runtime data from past program executions. For a given function, it rapidly derives and updates a knowledge base containing sufficient conditions for identifying redundant array bounds checking. We evaluate CHOP on real-world applications and benchmark (such as SPEC) and the experimental results show that on average 80.12\% of dynamic bounds check instructions can be avoided, resulting in improved performance up to 95.80\% over SoftBound.
		\end{abstract}
		
		\begin{IEEEkeywords}
			Memory Safety, Convex hull, Bounds Check
	\end{IEEEkeywords}}

	\maketitle

	\IEEEdisplaynontitleabstractindextext

	%
	\IEEEpeerreviewmaketitle

\section{Introduction}
\label{sec:intro}

Many software bugs and vulnerabilities in C/C++ applications occur due to the unsafe pointer usage and out-of-bound array accesses. 
This also gives rise to security exploits taking advantage of buffer overflows or illegal memory reads and writes. Below are some of the recent examples.
\begin{inparaenum}[i)]
	\item A stack overflow bug inside function $getaddrinfo()$ from glibc was discovered by a Google engineer in February 2016. Software using this function could be exploited with attacker-controller domain names, attacker-controlled DNS servers or through man-in-the-middle attacks~\cite{glibc}.
	\item Cisco released severe security patches in 2016 to fix a buffer overflow vulnerability in the Internet Key Exchange (IKE) from Cisco ASA Software. This vulnerability could allow an attacker to cause a reload of the affected system or to remotely execute code~\cite{Cisco}. 
\end{inparaenum}

In order to protect software from spatial memory/array bounds violations, tools such as SoftBound~\cite{SoftBound_PLDI_2009} have been developed that maintains metadata (such as array boundaries) along with rules for metadata propagation when loading/storing pointer values. By doing so, SoftBound ensures that pointer accesses do not violate boundaries by  performing runtime checks. While such a tool offers protection from spatial safety violations in programs, we should also note that they often incur high performance overheads due to a number of reasons.
\begin{inparaenum}[a)]
	\item Array bounds checking add extra instructions in the form of memory loads/stores for pointer metadata, which also needs to be duplicated and passed between pointers during assignments.
	\item In pointer-intensive programs, such additional memory accesses can introduce memory bandwidth bottleneck and further degrade system performance.
\end{inparaenum} 

To mitigate runtime overheads, static techniques to remove redundant checks have been proposed. ABCD ~\cite{ABCD_SIGPLAN_2000} builds and solves systems of linear inequalities involving array bounds and index variables, while WPBound~\cite{wpcheck} statically computes the potential ranges of target pointer values inside loops, then compares them with the array bounds obtained from SoftBound to avoid SoftBound-related checks.

\begin{figure}[t]
	\begin{center}
		\includegraphics[scale=0.5]{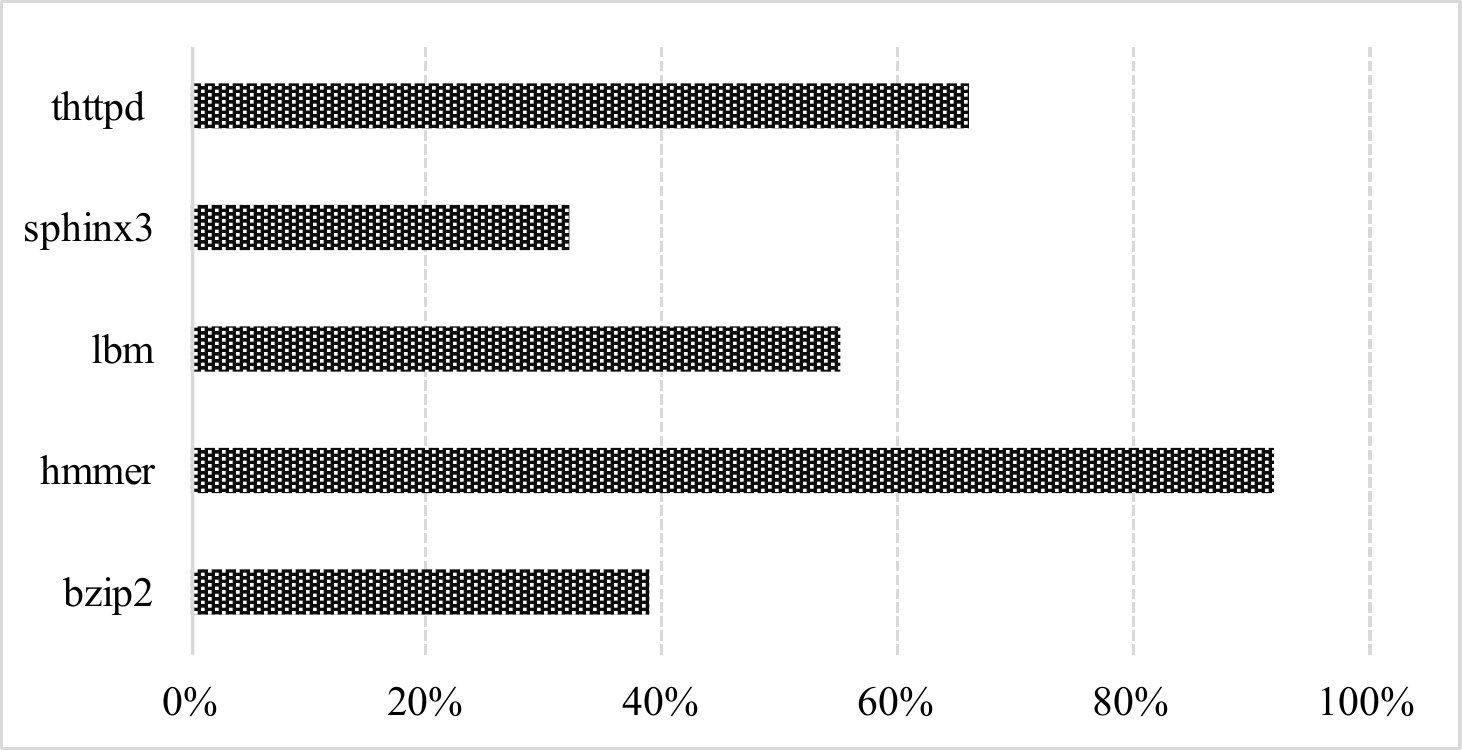}
		\caption{Runtime overhead for SoftBound compared to original application}
		\label{fig:RuntimeOverhead}
	\end{center}
	\vspace{-0.3in}
\end{figure}

However, such static approaches are limited by a tradeoff between the tractability of static analysis and the effectiveness of redundant checks identification, because optimally removing redundant checks may require building and solving constraint systems that become prohibitive. For programs at-scale, static analysis is often restricted to considering simplified constraint systems (e.g., only difference constraints in~\cite{ABCD_SIGPLAN_2000}) and thus falls short on achieving high redundant-check removal/bypassing rate.

\begin{figure*}[t]
	\centering
	\begin{minipage}{.31\textwidth}
		\captionsetup[lstlisting]{font={small,tt}}
		\caption{Non-instrumented Code}
		\label{fig:originalCode}
		\begin{lstlisting}[xleftmargin=0em,numbers=left,frame = single ,basicstyle=\scriptsize\ttfamily]
		static void
		foo(char* src, char* dst, 
		int ssize, int dsize, int snum)
		{
		char* cp1;
		char* cp2;
		if(ssize+3*snum+1>dsize){
		dsize = ssize+3*snum;
		dst = (char*) realloc(dst,dsize);
		}
		for ( cp1 = src, cp2 = dst;
		*cp1 != '\0' && 
		cp2 - dst < dsize - 1;
		++cp1, ++cp2 )
		{
		
		switch ( *cp1 )
		{
		case '<':
		*cp2++ = '&';
		*cp2++ = 'l';
		*cp2++ = 't';
		*cp2 = ';';
		break;
		case '>':
		*cp2++ = '&';
		*cp2++ = 'g';
		*cp2++ = 't';
		*cp2 = ';';
		break;
		default:
		*cp2 = *cp1;
		break;
		}
		}
		
		*cp2 = '\0';
		}
		\end{lstlisting}
	\end{minipage}
	\begin{minipage}{.33\textwidth}
		\captionsetup[lstlisting]{font={small,tt}}
		\caption{SoftBound Instrumented Code}
		\label{fig:SBCode}
		\begin{lstlisting}[xleftmargin=0em,numbers=none,frame = single ,basicstyle=\scriptsize\ttfamily]
		static void
		foo_SB(char* src, char* dst,
		int ssize, int dsize, int snum)
		{
		char* cp1; char* cp2;
		if(ssize+3*snum+1>dsize){
		dsize = ssize+3*snum;
		dst = (char*) realloc(dst,dsize);
		}
		for ( cp1 = src, cp2 = dst;
		*cp1 != '\0' && cp2 - dst < dsize - 1;
		++cp1, ++cp2 )
		{
		switch ( *cp1 )
		{
		case '<':
		//CHOP: trip count tc1 here
		CHECK_SB(cp2);*cp2++ = '&';
		CHECK_SB(cp2);*cp2++ = 'l';
		CHECK_SB(cp2);*cp2++ = 't';
		CHECK_SB(cp2);*cp2 = ';';
		break;
		case '>':
		//CHOP: trip count tc2 here
		CHECK_SB(cp2);*cp2++ = '&';
		CHECK_SB(cp2);*cp2++ = 'g';
		CHECK_SB(cp2);*cp2++ = 't';
		CHECK_SB(cp2);*cp2 = ';';
		break;
		default:
		//CHOP: trip count tc3 here
		CHECK_SB(cp1);CHECK_SB(cp2);
		*cp2 = *cp1;
		break;
		}
		}
		CHECK_SB(cp2);*cp2 = '\0';
		}
		\end{lstlisting}
	\end{minipage}
	\begin{minipage}{.34\textwidth}
		\captionsetup[lstlisting]{font={small,tt}}
		\caption{CHOP Optimized Code}
		\label{fig:CHOPCode}
		\begin{lstlisting}[xleftmargin=0em,numbers=none,frame = single ,basicstyle=\scriptsize\ttfamily]
		//original foo() function 
		static void
		foo
		(char* src, char* dst, 
		int ssize, int dsize, int snum)
		{...}
		
		
		//SoftBound instrumented foo()
		static void
		foo_SB
		(char* src, char* dst, 
		int ssize, int dsize, int snum)
		{...}
		
		
		
		int
		main()
		{
		char *src, *dst;
		int ssize, dsize, snum;
		...  
		/*determine whether it's
		inside the safe region*/
		
		if(CHECK_CHOP(src,dst,ssize,dsize,snum)) 
		{
		foo(src,dst,ssize,dsize,snum);
		}
		else
		{
		foo_SB (src,dst,ssize,dsize,snum)
		} 
		
		...
		
		}   
		\end{lstlisting}
	\end{minipage}
	\vspace{-0.3in}
\end{figure*}

In this paper, we propose CHOP, a novel approach that builds and verifies conditions for eliminating bounds checking on the fly by harnessing runtime information instead of having to rely on discovering redundant checks solely during compile-time or using static code analysis. CHOP is effective in bypassing a vast majority of redundant array checks while being simple and elegant. The key idea is to infer the safety of a pointer dereference based on statistics from past program executions. If prior executions show that the access of array {\it A} with length {\it L} at index {\it i} is within bound (which is referred to as a \textbf{data point}), then it is safe to remove the checks on any future access of A with length no smaller than {\it L} and an index no larger than {\it i}. 
As a result, a \textbf{``safe region"} is built by combining the ranges derived from relevant variables and array lengths in past executions. Any future dereference of the target pointer will be regarded as safe if it falls within the safe region. In general, a safe region is the area that is inferred and built from given data points, such that for any input points within the region, the corresponding target pointer is guaranteed to have only safe memory access, e.g., all bounds checking related to the pointer can be removed. We investigated two methods to effectively construct safe regions, i.e., the union and convex hull approaches. The union approach builds a safe region by directly merging the safe regions that are defined by each individual data point. While the union approach is light-weight and sufficient for data points with low dimensions, it does not have the ability to infer a larger safe region from known data points (e.g., through an affine extension), which is crucial for high-dimensional data points. 
In such cases, we can further expand the union of safe regions to include the entire convex hull, which is the smallest convex set containing all known data pointers and their safe regions. Due to such inference, our convex hull approach is able to render a higher ratio of redundant check bypassing. As demonstrated through function $defang()$ from $thttpd$ application, the convex hull approach is shown to achieve 82.12\% redundant check bypassing compared with 59.53\% in union approach. To further improve efficiency, we prioritize CHOP to bounds-check performance hotspots that incur highest overhead with SoftBound.


In this article, we make the following significant contributions compared to our previous work SIMBER~\cite{SIMBER}:
\begin{asparaenum}
	\item We propose CHOP, a tool that let programs bypasses bounds checking by utilizing convex hull optimization and runtime profile-guided inferences. We utilize a convex hull-based approach to build the safe regions for pointer accesses. With convex hull optimization, CHOP can efficiently handle high-dimensional data points and the runtime bounds check bypassing ratio is improved against SIMBER.
	\item We observed no \textbf{false positives} of bounds check bypassing from our experimental results. CHOP identifies a bounds check as redundant only if it is deemed unnecessary using the sufficient conditions derived from past program executions. (A ``false positive'' means a bounds check that should be conducted is wrongly bypassed.)
	\item We evaluate CHOP on expanded set of real-world benchmarks and validate significant overhead reduction of spatial safety checks by 66.31\% compared to SoftBound on average. 
\end{asparaenum}


\section{System Overview}
\label{sec:background}

SoftBound stores the pointer metadata (array base and bound) when pointers are initialized, and performs array bounds checking (or validation) when pointers are dereferenced. For example, for an integer pointer {\it ptr} to an integer array {\it intArray}[100], SoftBound stores {\it ptr\_base} = \&{\it intArray}[0] and {\it ptr\_bound} = {\it ptr\_base} + {\it size(intArray)}. When dereferencing {\it ptr+offset}, SoftBound obtains the base and bound information associated with pointer {\it ptr}, and performs the following check: if the value of {\it ptr} is less than {\it ptr\_base}, or, if {\it ptr+offset} is larger than {\it ptr\_bound}, the program terminates.

A disadvantage for such an approach is that, it can add performance overheads to application runtime especially due to unnecessary metadata tracking and pointer checking for benign pointers. Fig.~\ref{fig:RuntimeOverhead} shows the runtime overhead of SoftBound instrumented applications over original applications, taking thttpd and SPEC2006~\cite{spec} as benchmarks. Existing works~\cite{ABCD_SIGPLAN_2000,wpcheck} mainly analyze relationship between variables in source code, build constraint systems based on static analysis and solve the constraints to determine redundant checks.

In CHOP, we propose a novel framework where the bounds check decisions are made using runtime data and inferences. Our results show that even limited runtime data can be quite powerful in inferring the safety of pointer dereferences. Consider the example shown in Fig.~\ref{fig:originalCode}, where $foo(src, dst, ssize, dsize, snum)$ converts the special characters `$<$' and `$>$' in string $src$ of length $ssize$ into an HTML expression while keeping other characters unchanged. The result is stored in $dst$ of length $dsize$. The total number of special characters is $snum$. Pointer $cp2$ is dereferenced repeatedly inside the $for$ loop, e.g., in lines 20-23 and 26-29. If SoftBound is employed to ensure memory safety, bounds checking (denoted by $CHECK\_SB$ in Fig.~\ref{fig:SBCode}) will be added before each pointer dereference. For every iteration of the $for$ loop, the {\it CHECK\_SB} will be executed, thus leading to intensive checks and overhead. We note that a buffer overflow will occur only if $cp2$ is smaller than $dst+dsize-1$ at the end of the second last iteration of the $for$ loop, but exceeds $dst+dsize-1$ during the last iteration. Later, when $cp2$ is dereferenced, the access of string $dst$ is past by the bound given by $dsize$. It is easy to see the number of iterations visiting line 19, 25 and 31 determines exactly the length of string $dst$. Therefore, $dst$ will have final length $4*(tc1+tc2)+tc3+1$, where $tc1$, $tc2$ and $tc3$ are three auxiliary branch-count variables instrumented by CHOP. Any bounds check can be safely removed as long as $4*(tc1+tc2)+tc3+1\le dsize$.

Existing static approaches such as ABCD~\cite{ABCD_SIGPLAN_2000} that rely on building and solving simplified constraint systems (e.g., by considering only pair-wise inequalities) cannot discover such composite condition involving multiple variables. As a result, the SoftBound checks will remain in the $foo\_SB()$ and bound information of both pointers needs to be kept and propagated into $foo\_SB()$ at runtime, leading to high overhead.

In this paper, we show that rapidly removing all the bounds checking in \textit{foo}() is indeed possible using CHOP's statistical inference. Our solution stems from two key observations. First, redundant bounds checking can be effectively identified by comparing the value of $4*(tc1+tc2)+tc3+1$ with the value of $dsize$. In fact, all checks in $foo\_SB()$ can be eliminated if $4*(tc1+tc2)+tc3+1 \le dsize$. Next, through dependency analysis (detailed in section~\ref{subsec:sdDG}) along with profiling previous program executions, we find that the value of $4*(tc1+tc2)+tc3+1$ depends on the input arguments $snum$ and $ssize$ with positive coefficients, i.e., $4*(tc1+tc2)+tc3+1=3*snum+ssize+1$. Hence, given that $snum$, $ssize$ and $dsize$ values from past executions are safe, we can conclude that future executions are also guaranteed to be safe for any smaller values of $snum$ and/or $ssize$ and larger values of $dsize$. Combining the conditions derived from past executions, we can effectively derive a set of sufficient conditions (known as the safe region) for redundant check elimination. In general, CHOP will build a safe region with respect to the pointer-affecting variables based on all past executions, and update it as new data points become available. Future executions that satisfy the conditions of such safe region will be deemed as bound-safe. Note that it is possible that for some functions, we cannot infer the linear relationships among trip counts and function arguments. Hence we cannot perform function-level bounds check decision based on the function arguments, but have to get the values of pointer-affecting variables inside the function to bypass potential redundant bounds checking.

\section{System Design}
\label{sec:SD}
\begin{figure}[t]
	\centering 
	\includegraphics[scale=0.145]{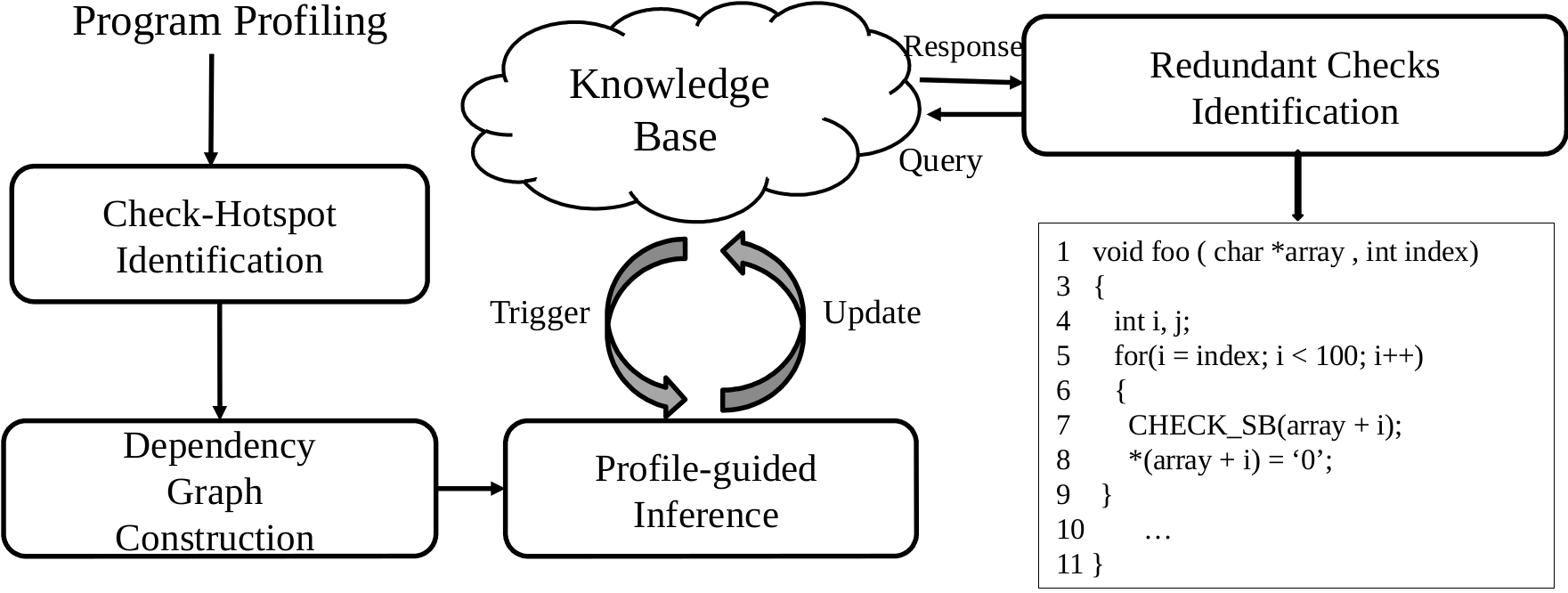}
	\caption{System Diagram}
	\label{fig:SD}
	\vspace{-.2in}
\end{figure}
CHOP consists of five modules: Dependency Graph construction, Profile-guided Inference, Knowledge Base, Runtime checks bypassing and Check-HotSpot Identification. Fig.~\ref{fig:SD} presents our system diagram. Given a target pointer, CHOP aims to determine if the pointer dereference needs to be checked. The {\bf pointer-affecting} variables, which can affect the value of target pointers (e.g., the base, offset and bound of the array). The rules for safe regions are then created based on the values of the pointer-affecting variables and are stored in the knowledge base as inferences for future executions. If the values of pointer-affecting variables satisfy the safe region rules, then the corresponding pointer dereference is considered to be safe.

\subsection{Dependency Graph Construction}
\label{subsec:sdDG}
Dependency Graph (DG) is a bi-directed graph $\mathcal{G}=(\mathcal{V}, \mathcal{E})$,  which represents program variables as vertices in $\mathcal{V}$ and models the dependency between the variables and pointers' bases/offsets/bounds through edges in $\mathcal{E}$. We construct a DG for each function including all of its pointers and the pointer-affecting variables.

\begin{definition}[DG-Node]
	The nodes in dependency graphs are pointers and the variables that can affect the value of pointers such as
	\begin{itemize}
		\item the variables that determine the base of pointers through pointer initialization, assignment or casting;
		\item variables that affect the offset and bound of pointers like array index, pointer increment and variables affecting memory allocation size;
		\item \textbf{Trip Counts} - the auxiliary variables to assist the analysis of loops. A trip count is the number of times a branch (in which a target pointer value changes) is taken.
	\end{itemize}
\end{definition}

\begin{definition}[DG-Edge]
	DG-Node $v_1$ will have an out-edge to DG-Node $v_2$ if $v_1$ can affect $v_2$.
\end{definition}

Algorithm~\ref{DG} shows the pseudo code of dependency graph construction for function $foo()$. First, we obtain all pointers and their pointer-affecting variables and represent them as DG-Nodes. Second, for each pair of identified DG-Nodes, we assign a  DG-Edges according to the rules in Remark~\ref{DGrules}.

\begin{algorithm}
	\caption{Dependency graph construction for a given function $foo()$} \label{DG}
	\small
	\begin{algorithmic}[1]
		\State \text{Input: source code of function $foo()$	} $  $
		\State $\text{Construct Abstract Syntax Tree, (AST) of function $foo()$}$  
		\State $\text{Initialize $\mathcal{V}=\phi$, $\mathcal{E}=\phi$}$
		\For {$\text{each variable $v$ in AST}$}	
		\State $\text{$\mathcal{V}= \mathcal{V} + \{v\}$}$
		\EndFor
		\For {$\text{each statement $s$ in AST}$}
		\For {$\text{each pair of variables $j,k$ in $s$}$}
		\State $\text{add edge $e(j,k)$ to  $\mathcal{E}$ according to Remark~\ref{DGrules}}$	   	
		\EndFor
		\EndFor
		\State \text{Output: Dependency-Graph $\mathcal{G}=(\mathcal{V},\mathcal{E})$	} $  $
	\end{algorithmic}
	
\end{algorithm}
\vspace{-.1in}

\begin{remark}
	\label{DGrules}
	\vspace{-.1in}
	\begin{framed}
		{\centering\bf \emph{Edges added into Dependency Graph:}}
		\begin{itemize}
			\item[]\textbf{E1}  Assignment statements\ \ \ \ \ \ \ \ \ \ A := $alpha\cdot$B \ \ \ \ \ \ \ \ ~\textit{B}$\,\to\,$\textit{A}
			\begin{itemize}[-]
				\item If constant $alpha$ is positive, then B is positively correlated to A
				\item If constant $alpha$ is negative, then B is negatively correlated to A
			\end{itemize}
			
			\item[]\textbf{E2} Function parameters\ \ \ \ \ \ \ \ \ \ Func(A,B)\ \ \ \ \ \ \ \ ~\textit{B}$\leftrightarrow$~\textit{A}
			
			\item[]\textbf{E3} Loops  \ \ for.../while...  \ ~\textit{Add Trip Counts to Loops}
			\item[](1) Assignment inside Loops\ \ \ \ \ \ \ \ A := B \ \ \ \ \ \ \ \ ~\textit{TC}$\,\to\,$\textit{A}
			\item[]\textbf{E4} Array Indexing\ \ \ \ \ \ \ \ \ \ \ \ \ \ \ \ \ \ \ \ \ \ A[i]\ \ \ \ \ \ \ \ \ \ \ \ \ ~\textit{i}$\,\to\,$\textit{A}
			
		\end{itemize}
		\label{fig:DG-rules}
		
	\end{framed}
	\vspace{-.1in}
\end{remark}

\subsection{Profile-guided Inference}
\label{subsec:sdSI}

Each function has its own dependency graph. After the dependency graph is built, it includes all pointers in the function and the pointer-affecting variables for all these pointers. We traverse dependency graph and identify adjacent DG-Nodes that represent the pointer-affecting variables associated with each target pointer. The target pointer will have an entry in the form of ${(func,ptr):(var_1,var_2,...,var_n)}$ where $func$ and $ptr$ stand for functions and pointers, with $var_i$ being the name of pointer-affecting variables associated with pointer $ptr$ in function $func$. By logging the values of these variables during program executions, we then build conditions for bypassing redundant runtime bounds check.

This module builds safe regions based on the pointer-affecting variables identified by dependency graphs and updates the safe regions through runtime data inference from previous execution. Once the pointer-affecting variables for the target pointer are identified as shown in Section \ref{subsec:sdDG}, CHOP will collect the values of pointer-affecting variables and produces a {\bf data point} in Euclidean space for each execution. The coordinates of each data point are the values of pointer-affecting variables. The dimension of the Euclidean space is the number of pointer-affecting variables for the target pointer. 

The inference about pointer safety can be derived as follows. Suppose a data point $p$ from prior execution with pointer-affecting variables $vp_1, vp_2, ..., vp_d$, is checked and deemed as safe. Another data point $q$ for the same target pointer but from another execution, is collected with pointer-affecting variables $vq_1, vq_2, ...,vq_d$. If each pointer-affecting variable of $q$ is not larger than that of $p$, e.g., $vq_1\le vp_1$, $vq_2\le vp_2$, ..., $vq_d\le vp_d$, then the bounds checking on the target pointer can be removed in the execution represented by $q$. Intuitively, if the increase of a variable value causes an increase of the index value or a decrease of the bound value, it will be denoted as positively correlated point-affecting variable. Similarly, the negatively correlated pointer-affecting variables are those cause decrease in index values (or increase in bound values) when they increase. The positively correlated pointer-affecting variables are safe when they are smaller and negatively correlated pointer-affecting variables are safe when they are larger. We unify the representations of pointer-affecting variables by converting a negatively-correlated variable $var_{neg}$ to $C-var_{neg}$ where C is a large constant that could be the maximum value of an unsigned 32-bit integer. Further, if multiple data points from prior executions are available, we integrate the safe conditions of individual data points to build a safe region for future inference.
%
%

As mentioned previously, the safe region is where pointer accesses are safe. In particular, the safe region of a single data point is the enclosed area by projecting it to each axis, which includes all input points that have smaller pointer-affecting variable values. For example, the safe region of a point $(3,2)$ is all points with the first coordinate smaller than 3 and the second coordinate smaller than 2 in $\mathbb{E}^2$. 

CHOP explores two approaches to obtain the safe region of multiple data points: \textit{union} and \textit{convex hull}. The union approach merges the safe regions generated by all existing data points to form a single safe region. We consider a larger safe region through building the convex hull of existing data points, and then deriving the linear condition of convex hull boundary as the condition for bypassing array bounds checking.

\subsubsection{Union}
Given a set $\mathcal{S}$ which consists of N data points in $\mathbb{E}^D$, where $D$ is the dimension of data points, we first project point $s_i \in \mathcal{S} , i=1,2,...,N$, to each axis and build N enclosed areas in $\mathbb{E}^D$, e.g., building safe region for each data point. The union of these N safe regions is the safe region of $\mathcal{S}$, denoted by $SR(\mathcal{S})$. {\it Thus, if a new data point $s_{new}$ falls inside $SR(\mathcal{S})$, we can find at least one existing point $s_k$ from $\mathcal{S}$ that dominates $s_{new}$. That is to say, the enclosed projection area of $s_k$ covers that of $s_{new}$, which means for every pointer-affecting variable, the $var_i$ of $s_k$ is larger than or equal to $var_i$ of $s_{new}$. Hence $s_{new}$ is guaranteed to be safe when accessing the memory.} Generally, when the index/offset variables of new data points are smaller than existing data points or the bound variable of new data point is larger than existing data point, the new data point will be determined as safe.

\subsubsection{Convex Hull}

Given a set of points $X$ in Euclidean space $\mathbb{E}^D$, convex hull is the minimal convex set of points that contains $X$, denoted by $Conv(X)$. In other words, convex hull of set $X$ is the set of all convex combination of points in $X$ as shown in equation \ref{eq:convexHull}. 
\begin{equation}
\label{eq:convexHull}
Conv(X)=\Big\{\sum_{i=1}^{|X|}\alpha_ix_i\Big| (\forall i:\alpha_i\ge 0)\wedge \sum_{i}^{|X|}\alpha_i=1\Big\}
\end{equation}

Based on prior $n$ execution samplings, the values of pointer-affecting variables are collected into a set $\mathcal{S}$ which consists of $n$ data points $\{s_i|i=1,2,...,n\}$ in $\mathbb{E}^D$. The convex hull of $\mathcal{S}$ is denoted by $CH(S)$. Suppose the number of pointer-affecting variables of target pointer is $D$, then each data point in $\mathcal{S}$ is a $D$ dimensional vector ($s_{i1}$,$s_{i2}$,...,$s_{iD}$). In this paper, CHOP employs the quickhull~\cite{quickhull} algorithm to construct the convex hull of all data points in $\mathcal{S}$ as the safe region. 

The reason of utilizing convex hull approach to construct safe region is that, it is bounded by the convex linear combination of data points, which is consistent with the linear constraints among pointer-affecting variables. If there exists a universal linear inequality of pointer-affecting variables for each $s_i$ with positive coefficients, then any point that is not outside the convex hull $CH(S)$ also has such linear inequality for its pointer-affecting variables, as stated in theorem ~\ref{theorem:convexhull}.

\begin{theorem}
	In $\mathcal{S}$, the coordinates of point $s_i$ is ($s_{i1}$,$s_{i2}$,...,$s_{iD}$). If $\forall i\in$ \{1,2, ...,$n$\}, $\sum_{j=1}^{D}\beta_j s_{ij}\le C$, then for all points $y_{m}\in CH(S)$, $\sum_{j=1}^{D} \beta_j y_{mj}\le C$ ($C$ is a constant and $\beta_j$ is the coefficient of $s_{ij}$).
	\label{theorem:convexhull}
\end{theorem}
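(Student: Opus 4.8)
The plan is to use the defining property of the convex hull: every point $y_m \in CH(S)$ is, by equation~\eqref{eq:convexHull}, a convex combination of the data points $s_1, \dots, s_n$. So I would start by writing $y_m = \sum_{i=1}^{n} \alpha_i s_i$ with $\alpha_i \ge 0$ and $\sum_{i=1}^{n} \alpha_i = 1$. Coordinate-wise this reads $y_{mj} = \sum_{i=1}^{n} \alpha_i s_{ij}$ for each $j \in \{1,\dots,D\}$.

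The next step is a direct computation: substitute this expression into the linear form $\sum_{j=1}^{D} \beta_j y_{mj}$, swap the order of the two finite sums, and factor out $\alpha_i$. This gives
\[
\sum_{j=1}^{D} \beta_j y_{mj} \;=\; \sum_{j=1}^{D} \beta_j \sum_{i=1}^{n} \alpha_i s_{ij} \;=\; \sum_{i=1}^{n} \alpha_i \sum_{j=1}^{D} \beta_j s_{ij}.
\]
Now apply the hypothesis $\sum_{j=1}^{D} \beta_j s_{ij} \le C$ for every $i$: since each $\alpha_i \ge 0$, each term satisfies $\alpha_i \sum_{j=1}^{D} \beta_j s_{ij} \le \alpha_i C$. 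Summing over $i$ and using $\sum_{i=1}^{n}\alpha_i = 1$ yields $\sum_{i=1}^{n}\alpha_i C = C$, so $\sum_{j=1}^{D}\beta_j y_{mj} \le C$, which is exactly the claim.

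Honestly, there is no real obstacle here — the statement is essentially the observation that a linear functional attains its maximum over a polytope at a vertex, specialized to the generating set $S$. The only points requiring the slightest care are (i) noting that the double sum is finite so the interchange of summation order is unconditionally valid, and (ii) making sure the nonnegativity of the $\alpha_i$ is invoked when turning the per-$i$ inequality into an inequality after multiplying by $\alpha_i$ (the inequality direction would flip for negative weights, which is precisely why the convex-combination structure, rather than an arbitrary affine combination, is what the argument needs). I would also remark that the same proof shows the analogous lower bound, and hence that the entire safe region condition "$\sum_j \beta_j \cdot (\text{var}_j) \le C$" propagates from the sampled data points to all of $CH(S)$, which is the property CHOP relies on for sound bounds-check elimination.
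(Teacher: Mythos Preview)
Your proof is correct and follows essentially the same approach as the paper's own proof: write $y_m$ as a convex combination $\sum_i \alpha_i s_i$, interchange the two finite sums, multiply the hypothesis by the nonnegative weights $\alpha_i$, and use $\sum_i \alpha_i = 1$ to conclude. The only cosmetic difference is the order of presentation---the paper first multiplies the per-point inequality by $\alpha_i$ and sums before swapping the summation order and substituting $y_{mj}$, whereas you substitute and swap first and then bound---but the argument is the same.
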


\begin{proof}
	Given $y_{m}\in CH(S)$ and equation \ref{eq:convexHull}, we have 
	\begin{equation}
	y_{m} =\sum_{i=1}^{n}\alpha_is_i 
	\end{equation}
	where $\forall i: \alpha_i\ge 0$ and $\sum_{i}^{n}\alpha_i=1$. 
	
	Since $\forall i\in$ \{1,2, ...,$n$\}, $\sum_{j=1}^{D}\beta_j s_{ij}\le C$, then
	\begin{equation}
	\label{eq:sum}
	\alpha_i \sum_{j=1}^{D}\beta_j s_{ij}\le \alpha_iC
	\end{equation}
	
	By summing up equation \ref{eq:sum} for $i=1$ to $n$, we have 
	\begin{equation}
	\sum_{i=1}^{n}\alpha_i \sum_{j=1}^{D}\beta_j s_{ij}\le \sum_{i=1}^{n}\alpha_iC = C
	\label{eq:sumconv}
	\end{equation}
	
	Further convert equation \ref{eq:sumconv}:
	\begin{equation}
	\sum_{j=1}^{D}\beta_j\sum_{i=1}^{n}\alpha_i s_{ij}\le C
	\label{eq:sumconv2}
	\end{equation}
	
	Substitute $\sum_{i=1}^{n}\alpha_i s_{ij}$ by $y_{mj}$ in equation \ref{eq:sumconv2}, then 
	\begin{equation}
	\sum_{j=1}^{D} \beta_j y_{mj}\le C
	\end{equation}
	\label{proofconvex}
\end{proof}
Note that Theorem \ref{theorem:convexhull} can also be extended and applied to linear inequalities where the coefficients are not necessarily positive. As pointer bound information is added to dependency entry at the format of $(B-ptr\_bound)$, negatively related variable $var\_neg$ can be converted to new variables that have positive coefficients by $B-var\_neg$. 

\begin{equation}
\sum_{i=1}^{N}\beta_i \cdot VAR_i + \beta_{N+1}(B-P_{bound})<=C
\label{eq:generalRelation}
\end{equation}
Where $B$ is the bound of the target pointer and $C$ is a large constant such as $2^{32}-1$.
Equation \ref{eq:generalRelation} represents a hyperplane in $\mathbb{E}^{N+1}$ which separates the $\mathbb{E}^{N+1}$ space into two $\mathbb{E}^{N+1}$ subspaces. All normal data points that have legitimate pointer operations will fall inside the same subspace. Hence the convex hull built from these normal data points will be contained in this subspace which means all bounds checking on points falling inside or on the facet of this convex hull are safe to be removed.

Thus, if the new data point is inside corresponding convex hull, then the check can be removed and this check bypassing is guaranteed to be safe. 

Convex hulls with low dimensions are easy to represent. Before convex hull construction, we will use dimension-reduction techniques like PCA to eliminate dimensions that have internal linear relationship. This is equivalent to filtering out the planar points in a lower dimensional hyperplane from the convex hull. If the convex hull turns out to be one dimensional (line) or two dimensional (plane), then we can easily represent it as an inequality. For higher-dimensional convex hull, we will store the convex hull and verify the check bypassing condition by determining if the point lies inside, outside the convex hull or on its facet, which will be described in the section\ref{subsec:sdKB}. Compared with Union approach, the safe region built by Convex Hull approach is expanded and can achieve higher redundant checks bypassing under the assumption that pointer-affecting variables are linearly related to target pointers.

\begin{figure}[th]
	\noindent 
	\includegraphics[scale=0.4]{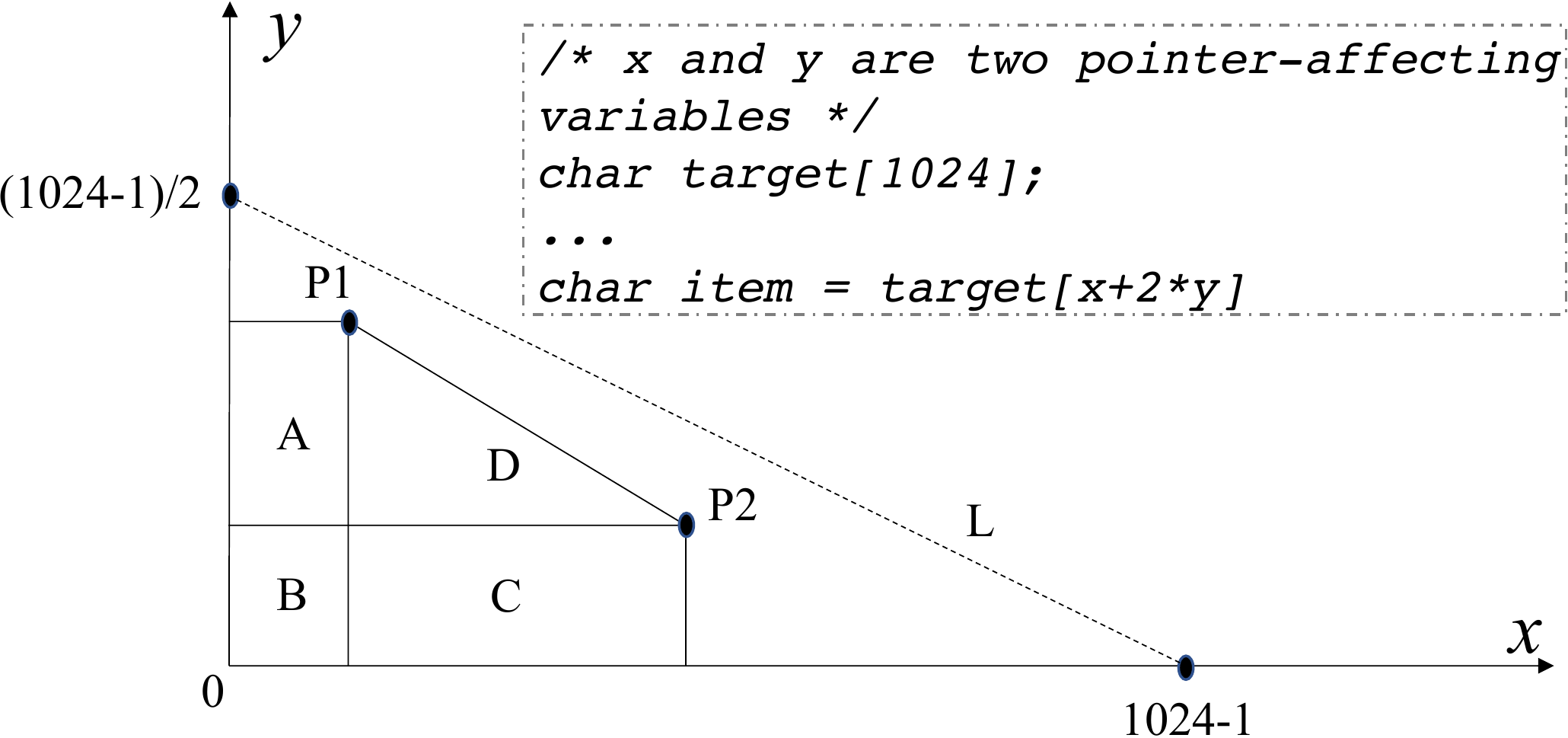}
	\caption{Convex Hull vs Union in a two-dimensional example}
	\label{fig:convexhull2d}
	\vspace{-.1in}
\end{figure}

We further illustrate the performance gain of convex hull approach over union approach using the example shown in Fig~\ref{fig:convexhull2d}. We have a two-dimensional case where the array $target$ has two pointer-affecting variables $x$ and $y$ (the bound of $target$ in this example is a fixed value of 1024 and the maximum value of legitimate index is 1023). $target$ will be dereferenced at the index $x+2*y$ and the value is passed to $item$. Suppose during one sampling execution, we observed that the value of $x$ is 160 and the value of $y$ is 400. So we have one data point, represented by $P1:$ $P1=(160,400)$. Similarly, from another execution, we get $P2=(510,170)$. And both of $P1$ and $p2$ are deemed as safe because SoftBound has performed bounds checking in those two executions. When the safe region is built by the union approach, it will be the union of area $A$, $B$ and $C$. On the other hand, if the safe region is built by the convex hull approach, then it will be the union of the area $A$,$B$,$C$ and $D$. The optimal condition of the safe region by convex hull would be the area enclosed by $x$ axis, $y$ axis and the dashed line $L$. It also shows that the safe region CHOP derives will never go beyond the theoretically optimal safe region. The reason is that if any new data point falls within the top-right side of $L$, CHOP can tell that it's outside the current safe region. Hence, the default SoftBound checks will be performed and out-of-bound access will be detected. Such data points will not be collected by CHOP for updating the safe region, i.e., all data points that CHOP collects will be within the theoretically optimal safe region. Therefore, the convex hull built by these data points will be a subset of the optimal safe region.


\subsubsection{Safe Region Update}

There are data points that can not be determined as safe or not by current safe region but later verified as legitimate. Such data points can be used to dynamically update the safe region. Given current safe region $SR(\mathcal{S})$ and the new coming data point $s_{new}$, $SR(\mathcal{S})$ will be updated to $SR(\mathcal{S})'$ by:
\begin{equation}
\label{eq:srUpdate}
SR' = SR(\mathcal{S} \cup s_{new}) = SR(\mathcal{S}) \cup SR(s_{new}) = SR(\mathcal{S}) \cup \mathcal{T},
\end{equation}
where $\mathcal{T}$ is the set of safe points inside $SR(s_{new})$ but outside $SR(\mathcal{S})$. If $\mathcal{T}$ is empty which means $SR(s_{new})$ is contained by $SR(\mathcal{S})$, then there is no need to update the safe region $SR(\mathcal{S})$. Otherwise the update of safe region encapsulates two scenarios: 
\begin{itemize}
	\item There are positively correlated pointer-affecting variables (such as  array index) of $s_{new}$ that have larger values than corresponding pointer-affecting variables of all points in $SR(S)$, 
	\item There are negatively correlated pointer-affecting variables (such as  bound of pointers) of $s_{new}$ that are smaller than those of all points in $SR(S)$
\end{itemize}

When one or both of above scenarios occur, the safe region will be enlarged to provide a higher percentage of redundant bounds check bypassing. The safe region is updated in a different thread from that of bounds checking so that it will not contribute to the execution time overhead of the tesing program. 

\subsection{Knowledge Base}
\label{subsec:sdKB}
\begin{algorithm}[h]
	\caption{Algorithm for deciding if a point is in convex hull: $isInHull()$} 
	\label{alg:inhull}
	\small
	\begin{algorithmic}[1]
		\State \text{Input: convex hull represented by $m$ facets $F = {f_1,f_2...,f_m}$}
		\State \text{Input: normal vectors (pointing inward) of facets $N = {n_1,...n_m}$}
		\State \text{Input: new data point $p$}
		\State \text{Output: isInHull}
		\State \text{Init: isInHull = True}
		\For {$\text{i in [1,m]}$}
		\State \text{$cur\_facet = f_i$}
		\State \text{randomly select a point $p_i$ from $f_i$}
		\State \text{$v_i = p-p_i$}
		\If {$v_i \cdot n_i$ is negative}
		\State \text{isInHull = False}
		\EndIf
		\EndFor
	\end{algorithmic}
\end{algorithm}
CHOP stores the safe regions for target pointers in a disjoint memory space - the Knowledge Base. The data in Knowledge Base represents the position and the sufficient conditions for bypassing the redundant bounds checking for each target pointer. Runtime Profile-guided Inference can be {\it triggered} to compute the Safe Region by Knowledge Base when we detect redundant checks, then the Knowledge Base can be {\it updated} as more execution logs are available.
\subsubsection{Union}

For the \textit{Union} approach, the values of pointer-affecting variables of the target pointer are kept in the knowledge base. Suppose we have a number of $K$ prior executions associated with pointer $p$ which has $D$ pointer-affecting variables, then a $K$x$D$ matrix $U_{KxD}$is stored. When performing the bounds checking for pointer $p$ in new executions, the value of $p$'s pointer-affecting variables are compared with those stored in $U$. Once a row in $U$ can dominates $p$'s pointer-affecting variables, which means the new data points that represents this new execution is inside one existing point, $p$ is considered as safe in this new execution.

\subsubsection{Convex Hull}

CHOP determines whether the new point is in Safe Region by the following method. For each facet of the convex hull, the normal vector of the facet is also kept in Knowledge Base. For a convex hull in $\mathbb{E}^D$, $D$ points that are not in the same hyperplane are sufficient to represent a $D$ dimensional facet. Suppose the convex hull built from sampling runs has $M$ facets. These $M$ data points are stored in the Knowledge Base as a hashmap in the format of $(d : data_d)$, where $d$ is the ID/index of data point and $data_d$ is the coordinates of data point $d$. Then this convex hull can be represented as a $M$x$D$ matrix $CH_{MxD}$ where each element is the index of sampling points and each row represents a $D$ dimensional facet. Now a new data point $T$ is available and CHOP will decide whether this new point is inside or outside each of the $M$ facets. Let $N_i$ be the normal vector (pointing inwards) of each facet $f_i$ , $\forall i=1,2,...,M$. From facet $f_i$, randomly choose one point denoted by $CH[i][j]$, link $CH[i][j]$ and the new point $T$ to build a vector $\vec{V_i}$. If the inner product $P_i$ of $\vec{V_i}$ and $N_i$ is positive which means the projection of $\vec{V_i}$ to $N_i$ has the same direction with $N_i$, then point $T$ is in the inner side of facet $f_i$. Repeat this for each facet, eventually, if $P_i \ge 0 \forall$ i=1,2,...,M, then the new point $T$ is inside the convex hull or right on the facets. We embed this process into function $isInHull()$ and demonstrate it in Algorithm~\ref{alg:inhull}.


If the data points are of one dimension, we store a threshold of pointer-affecting variable as the safe region for checks elimination. For higher dimensional data points, in case the safe region becomes too complex, we can store a pareto optimal safe region of less data points instead of the union of safe regions.


\subsection{Bypassing Redundant Array bounds checking}
\label{subsec:sdRuntime}

We instrument source code of benchmark programs to add a {\it CHECK\_CHOP}() function. {\it CHECK\_CHOP}() verifies the condition of bounds check elimination by comparing pointer-affecting variables collected from new executions with statistics from knowledge base. 

Two levels of granularity for redundant bounds check bypassing are studied: function level and loop level. 
\begin{inparaenum}[a)]
	\item Function-level redundant bounds check bypassing conditions are verified before function calls. If the new data point is inside the built safe region, the propagation of bound information and the bounds checking can be removed for the entire function.
	\item Memory access through pointers inside loops are most likely responsible for the high overhead of SoftBound checks. Loop-level redundant bounds check bypassing is performed when the condition doesn't hold for all target pointer dereferences inside the function. Instead of bypassing all bounds checking for target pointer in the function, the condition for bypassing bounds checking inside loops are examined. We ``hoist'' the bounds checking outside the loop. The safe region check is performed before the loop iterations. If the pointer accesses in one iteration are considered safe, then the bounds checking inside this iteration can be skipped.
\end{inparaenum}
%

\subsection{Check-HotSpot Identification}
\label{subsec:sdHot}


In order to maximize the benefit of our runtime check bypassing while maintaining simplicity, CHOP focuses on program {\em Check-HotSpots}, which are functions associated with intensive pointer activities and resulting in highest overhead to bounds checking. 



CHOP identifies Check-HotSpots using three steps as follows:
\begin{inparaenum}[a)]
	\item Profiling testing program: We use Perf profiling tool~\cite{perf} to profile both a non-instrumented program and its SoftBound-instrumented version. The execution time of each function (with and without bounds checking) are recorded.
	\item Analyzing function-level overhead $O_f$: For each function $f$, we calculate its individual contribution to bounds check overhead, which is the time spent in $f$ on SoftBound-induced bounds checking, normalized by the total overhead of the testing program. More precisely, let $T$ and $\hat{T}$ be the execution time of the non-instrumented, target program and its SoftBound-instrumented version, and $t_f$ and $\hat{t}_f$ be the time spent in function $f$, respectively. We have $O_f = (t_f - \hat{t}_f )/(T-\hat{T})$. 	
	\item Identifying Check-HotSpots: In general, we select all functions with at least 5\%\footnote{We use 5\% as threshold to identify Check-HotSpots, while this number can be varied depending on users' preference.} function-level overhead as the Check-HotSpots, which will be the target functions for bounds check bypassing.
	
\end{inparaenum}

In our evaluation, we consider two different types of applications: interactive applications and non-interactive applications. For non-interactive applications, such as SPEC2006 benchmark, we use the testing inputs provide with the benchmark. For interactive applications (such as web severs and browsers) that require user inputs and exhibit diversified execution paths, we intentionally generate user inputs/requests that are able to trigger desired execution paths of interest (e.g., containing potential memory bounds check vulnerabilities). Check-HotSpots are then identified accordingly.

\subsection{Example}
\label{sec:SDExample}

CHOP instruments the code by adding two new branches as shown in Fig.\ref{fig:CHOPCode}. The function {\it CHECK\_CHOP()} verifies if the inputs to function {\em foo()} satisfy the conditions for bounds check bypassing(i.e., in the safe region). Then, one of the branches is selected based on whether bounds checking are necessary.

Recall the SoftBound instrumented {\it foo\_SB}() function from Fig. \ref{fig:SBCode} . We add trip counts $tc1$, $tc2$ and $tc3$ for the three cases in lines 17, 24 and 31, respectively. According to CHOP's dependency graph construction, there are edges from node $tc1$, $tc2$ and $tc3$ to pointer node \textit{cp2}. Further, the values of $4*(tc1+tc2)+tc3$ is determined by input variable $ssize$ and $snum$, producing edges from \textit{ssize} and \textit{snum} to trip counts in the dependency graph. Thus, the pointer-affecting variables for pointer {\it cp2} are $(ssize,snum,C-dsize)$. Suppose that constant $C$ is defined as $2^{32}-1$ (i.e., the maximum 32-bit unsigned integer), and that we have three past executions with pointer-affecting variable values as follows: $p_1=(200,60,2^{32}-1-256)$, $p_2=(180,20,2^{32}-1-256)$ and $p_3=(150,40,2^{32}-1-512)$. The safe region for check elimination will be built based on above three data points $p_1,p_2,p_3$ in a $\mathbb{E}^3$ space according to the approach described in section \ref{subsec:sdSI}. 

In future executions, any input to function {\em foo()} generates a new data point $p$ with pointer-affecting variables $(p_{ssize}, p_{snum}, p_{dsize})$ for examination. It is verified by {\it CHECK\_CHOP}() to determine if point $p$ is inside this safe region, in order to decide whether bounds check elimination is possible. In particular, in the union approach, as long as we can find one existing point $p_i$ (from $p_1,p_2,p_3$) that Pareto-dominates $p$, i.e., any    pointer-affecting variables (i.e., components) of $p_i$ is greater than or equal to that of $p$, 
then the memory access of pointer $foo\_SB:cp2$ is determined to be safe. 
In convex hull approach, we need to solve the convex hull containing points $p_1,p_2,p_3$. With sufficient data, the boundary of constructed convex-hull safe region can be derived as $ssize+3*snum+1\le dsize$, e.g., after all corner points have been observed. Similar to~\cite{ABCD_SIGPLAN_2000}, CHOP applies to both base and bound of arrays/pointers.

\section{Implementation}
\label{sec:implementation}

In this section, we explain in details how our system is implemented. Fig.~\ref{fig:Implemenation} shows the framework of CHOP with the tools and algorithms we use in different modules. Modules in CHOP can be categorized into two components: Pre-Runtime Analysis and Runtime Checks Bypassing. Pre-Runtime Analysis can be executed offline and Runtime Checks Bypassing is used to identify and avoid redundant checks during runtime. To obtain Check-HotSpot, we profile non-instrumented programs as well as SoftBound-instrumented programs to get the differences of execution time of user-level functions. Based on Check-HotSpot results, we used Static Code Analysis tool Joern to construct and traverse Dependency Graph to get pointer-affecting variables for target pointers. By logging the sampling executions, we build the union or convex hull for safe regions as the check bypassing conditions and store them in database for future inferences.
\begin{figure}
	\hspace{0.4in}
	\includegraphics[scale=0.5]{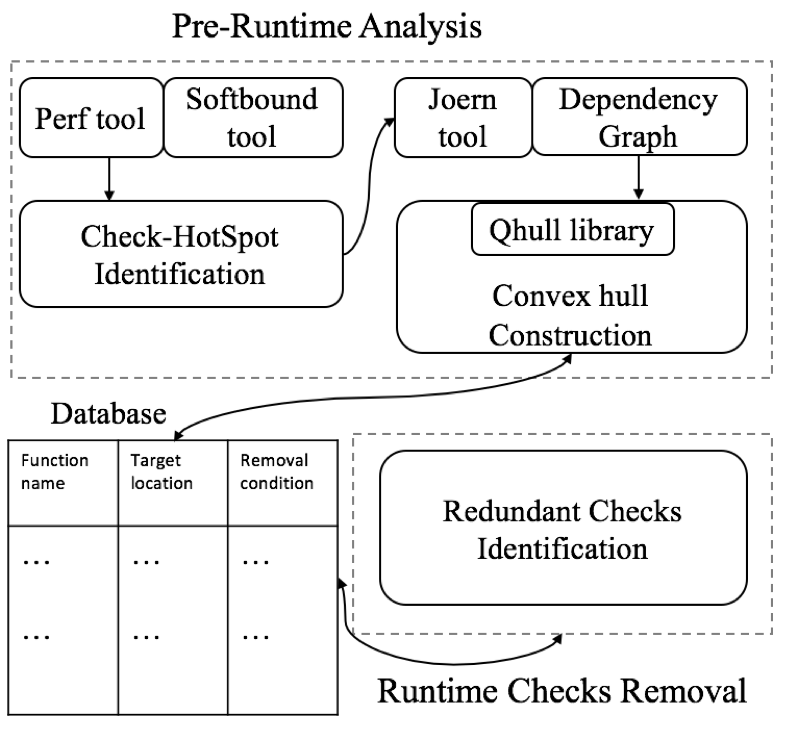}
	\caption{System framework and Implementation in Pre-Runtime Analysis and Runtime Checks Bypassing modules with the tools used in different components}
	\label{fig:Implemenation}
\end{figure}

\begin{table}[]
	\centering
	\scriptsize
	\begin{tabular}{c|c|c|c|}
		\cline{2-4}
		& \multicolumn{3}{c|}{Time spent in}                                                                                                                                                               \\ \hline
		\multicolumn{1}{|c|}{Function}    & \begin{tabular}[c]{@{}c@{}}Non-instrumented \\ version (s)\end{tabular} & \begin{tabular}[c]{@{}c@{}}SoftBound-instrumented\\  version(s)\end{tabular} & \begin{tabular}[c]{@{}c@{}}SoftBound overhead\\  breakdown\end{tabular} \\ \hline
		\multicolumn{1}{|c|}{\textit{F1}} & 814.87                                                                  & 1580.03                                                                      & 32.08\%                                                                 \\ \hline
		\multicolumn{1}{|c|}{\textit{F2}} & 977.08                                                                  & 1582.68                                                                     & 25.39\%                                                                 \\ \hline
		\multicolumn{1}{|c|}{\textit{F3}} & 1.67                                                                    & 353.76                                                                        & 14.76\%                                                                 \\ \hline
		\multicolumn{1}{|c|}{\textit{F4}} & 148.85                                                                  & 270.84                                                                    & 5.11\%                                                                  \\ \hline
	\end{tabular}
	
	\caption{Check-Hospot functions out of total 106 called-functions from sphinx. Function names from F1 to F4, F1: vector\_gautbl\_eval\_logs3, F2: mgau\_eval, F3: utt\_decode\_block and F4: subvq\_mgau\_shortlist. }
	\label{Hospots}
	
\end{table}

\subsection{Dependency Graph Construction}

Plenty of static code analysis tools exist for parsing and analyzing source code, such as~\cite{Pscan}~\cite{splint}~\cite{PR-Miner}~\cite{srivastava2011security}. In this paper, we analyze the code in the format of Abstract Syntax Tree (AST)~\cite{AST_ACSA_2012} to build the dependency graph such that we can easily obtain variable declaration types, statement types and dependencies of variables. An AST reveals the lexical structure of program code in a tree with variables/constants being the leaves and operators/statements being inner nodes. We build the AST using Joern~\cite{joern}. Joern is a platform for static analysis of C/C++ code. It generates code property graphs (such as CFG and AST) and stores them in a Neo4J graph database. 

Among all the variables in a function, we are only interested in the pointers and pointer-affecting variables, e.g., a sub-AST for each function without other redundancy. For this purpose, we instrument AST API from Joern with the idea from~\cite{AST_ACSA_2012}, where extracting sub-AST on the function level is studied.

The rules of constructing DG are shown in Algorithm~\ref{DG} and Remark~\ref{DGrules} in section~\ref{sec:SD}. After the dependency graph is constructed, we traverse the dependency graph to identify pointers for bounds check bypassing and use light-weight tainting~\cite{yamaguchi2014modeling,yamaguchi2015automatic} to determine the set of nodes connected to target pointers from the dependency graph.

\subsection{Statistical Inference and Knowledge Base}
\label{sec:sikb}
We employ Quickhull algorithm (Qhull)~\cite{quickhull} to compute the convex hull. For a target pointer that has $D$ pointer-affecting variables (including ($C-ptr\_bound$)) and $n$ sampling runs, we first generate $n$ points in $\mathbb{E}^D$, then select the $D$ boundary points w.r.t each axis. As a result, a total of n+D points in $\mathbb{E}^D$ will be the input of convex hull construction. In the running example, given the prior statistics mentioned in Section \ref{sec:SDExample}, these 6 points are \begin{inparaenum}[a)]
	\item $(200,60,2^{32}-257)$,
	\item $(180,20,2^{32}-257)$,
	\item $(150,40,2^{32}-513)$,
	\item $(200,0,0)$,
	\item $(0,60,0)$,
	\item $(0,0,2^{32}-257)$.
\end{inparaenum}

Note that integer overflow is a special case in bounds checking. Assume an array $arr$ is accessed by using $arr[x + 1]$. A data point that was collected is ($x = UINT\_MAX$). Since the expression $x1+1$ overflows, $arr$ is accessed at position 0, which is safe if the array contains at least one element. By the default rules in CHOP, ($UINT\_MAX-1$) would then be determined to be inside a safe region. However, since no integer overflow occurs, $arr$ at position $UINT\_MAX$ is accessed, which would result in an out-of-bound access. Hence, CHOP performs special handling of integer overflow when updating the safe region. Suppose the data point that causes an integer overflow ($x = UNIT\_MAX$ in this case) is observed, when it is used to update the convex hull, we will calculate if there is an integer overflow, by determine if $UNIT\_MAX \ge UNIT_MAX - 1$. In general, if $arr$ is accessed by $arr[x+y]$, we will check if $x \ge UNIT\_MAX-y$ when updating the convex hull. If it holds, then we discard this data point and do not update the convex hull. Since convex hull updating happens offline, it will not affect the runtime overhead of bounds checking.

We use SQLite~\cite{owens2010sqlite} to store our Knowledge Base. We created a table, which has fields including function names, pointer names and the corresponding conditions for redundant checks bypassing(e.g., the matrix mentioned in Section \ref{subsec:sdKB}).

For Union safe region, if the data points are of one dimension, we store a threshold of pointer-affecting variable as the safe region for checks elimination. If the data points are of higher dimension, we only store the data points that are in the boundary of the union area. For Convex hull approach, we store the linear condition of the safe region boundary in the case of low-dimensional data points and a set of frontier facets (as well as their normal vectors) in the case if high-dimensional data points.

\subsection{Bypassing Redundant Checks}

For function-level redundant bounds check bypassing, we maintain two versions of Check-Hotspot functions: the original version (which contains no bounds checking) and the SoftBound-instrumented version that has bounds checking. By choosing one of the two versions of the function to be executed based on the result of {\it CHECK\_CHOP}() verification, we can either skip all bounds checking inside the function (if the condition holds) or proceed to call the original function (if the condition is not satisfied) where bounds checking would be performed as illustrated in Fig. \ref{fig:CHOPCode}.  The instrumentation of {\it CHECK\_CHOP()} condition verification inside functions leads to a small increase in code size (by about 2.1\%), and we note that such extra code is added only to a small subset of functions with highest runtime overhead for SoftBound (see Section~\ref{subsec:sdHot} for details). 

While function-level redundant bounds check bypassing applies to the cases where all the target pointer dereferences are safe inside the target function, loop-level removal provides a solution for pointer dereferences that can only be considered as partially safe when memory accesses inside loops are closely related to the loop iterator. The safety of pointer dereferences can be guaranteed when the value of loop iterator falls within certain range. In this case, we consider the loop iterator as a pointer-affecting variable and incorporate iteration information into the safe region. We duplicate loops similar to duplicating functions. Before entering the loop, the function \textit{CHECK\_CHOP()} is called and if all bounds checking inside the loop are considered safe, the check-free version of the loop will be called.

\subsection{Check-Hotspot Identification}
The program profile tool Perf is used to identify the Check-Hotspot functions. Perf is an integrated performance analysis tool on Linux kernel that can trace both software events (such as page faults) and hardware events (such CPU cycles). We use Perf to record the runtime overhead of target functions. 

We compile our test cases with SoftBound. For each Check-Hotpost function, we calculate the time spent in non-instrumented version and SoftBound-instrumented version, then calculate the difference between them to get the overhead of SoftBound checks. After ranking all functions according to the execution time overhead of SoftBound checks, we consider functions that contributes over 5\% bounds checking overhead as Check-Hotspot functions. Noting that we pick 5\% threshold for Check-Hotspot in this paper, but it can be customized depending on specific usages. 

TABLE~\ref{Hospots} shows the results for Check-Hotspot Identification for the application \textit{Sphinx3} from SPEC. In total, the four functions listed in the table contribute over 72\% runtime overhead of SoftBound. 

\section{Evaluation}
\label{sec:eva}

We use SoftBound (version 3.4) as the baseline to evaluate the effectiveness of CHOP on bypassing redundant bounds checking. All experiments are conducted on a 2.54 GHz Intel Xeon(R) CPU E5540 8-core server with 12 GByte of main memory. The operating system is Ubuntu 14.04 LTS.
We select two different sets of real world applications: (i) Non-interactive applications including 8 applications from SPEC2006 Benchmark suite, i.e., \textit{bzip2}, \textit{hmmer}, \textit{lbm}, \textit{sphinx3}, \textit{libuquantum}, \textit{milc}, \textit{mcf} and \textit{h264ref}; 
and (ii) Interactive applications including a light-weight web server thttpd (version beta 2.23). In the evaluation, we first instrument the applications using SoftBound and employ Perf to identify the Check-HotSpot functions in all applications. Similar to ABCD~\cite{ABCD_SIGPLAN_2000}, we consider the optimization of upper- and lower-bounds checking as two separated problems. In the following, we focus on redundant upper-bounds checking while the dual problem of lower-bounds checking can be readily solved with the same approach. In the experiments, we use both Union and Convex Hull approaches to obtain bounds check bypassing conditions for Check-HotSpot functions. We further compare the performance between these two approaches if they have different conditions for bounds check decisions. The inputs we used for testing SPEC applications are from the \emph{reference} workload provided with SPEC benchmarks. For thttpd, we created a script that randomly generate urls with variable lengths and characters, then send them together with thttpd requests to the server for evaluation. In general, for applications that do not provide developer supplied representative test cases, we note that fuzzing techniques~\cite{mcnally2012fuzzing}~\cite{woo2013scheduling} can be applied to generate test cases. The policies considered in our evaluation are \begin{inparaenum}[a)]
	\item SoftBound instrumentation (denoted as \textbf{SoftBound}).
	\item  CHOP Optimized SoftBound with redundant bounds check bypassing (denoted as \textbf{C.O.S}). 
\end{inparaenum}

Our Check-HotSpot identification identifies 23 functions from all 9 applications mentioned above. For example, in the application \textit{bzip2}, the bounds check overhead of the three functions \textit{bzip2::mainGtU}, \textit{bzip2::generateMTFValues} and \textit{bzip2::BZ2\_decompress} contribute 68.35\% to the total bounds check overhead in \textit{bzip2}. Similarly, we studied 98.01\% bounds check overhead in \textit{hmmer}, 86.19\% in \textit{lbm}, 62.58\% in \textit{sphinx3}, 72.71\% in \textit{milc}, 94.18\% in \textit{libquantum}, 69.55\% in \textit{h264ref}, 69.51\% in \textit{mcf}, and 83.56\% in \textit{thttpd}. We note that some Check-HotSpot functions contribute much more than others to SoftBound overhead mainly because they are executed frequently, e.g., \textit{bzip2::mainGtU} is called more than 8 million times, even though they have small code footprints.


\subsection{Removal of Redundant Array bounds checking}
\begin{figure}[t]
	\begin{center}
		\centering	
		\includegraphics[scale=0.145]{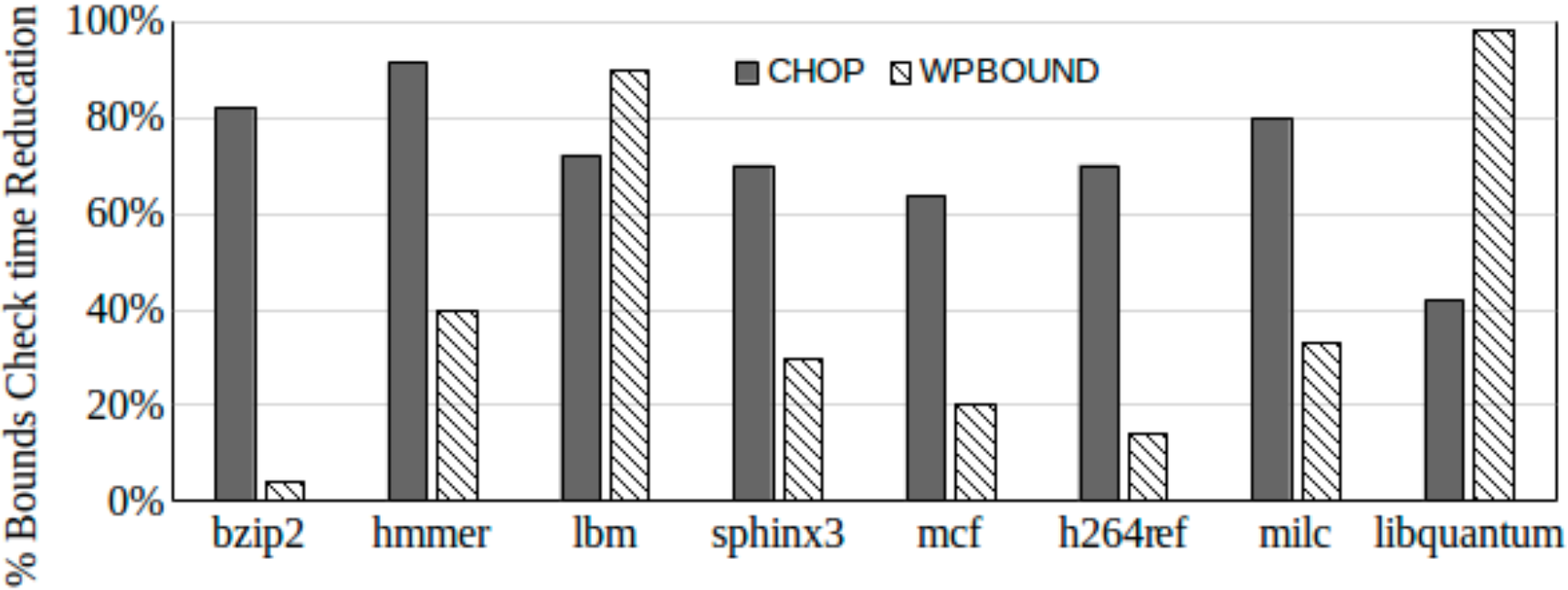}
		\caption{Comparison of normalized execution time overhead reduction between C.O.S and WPBound}
		\label{fig:cmp-runtime}
	\end{center}
	\vspace{-.2in}
\end{figure}

\begin{figure}[t]
	\begin{center}
		\centering	
		\includegraphics[scale=0.145]{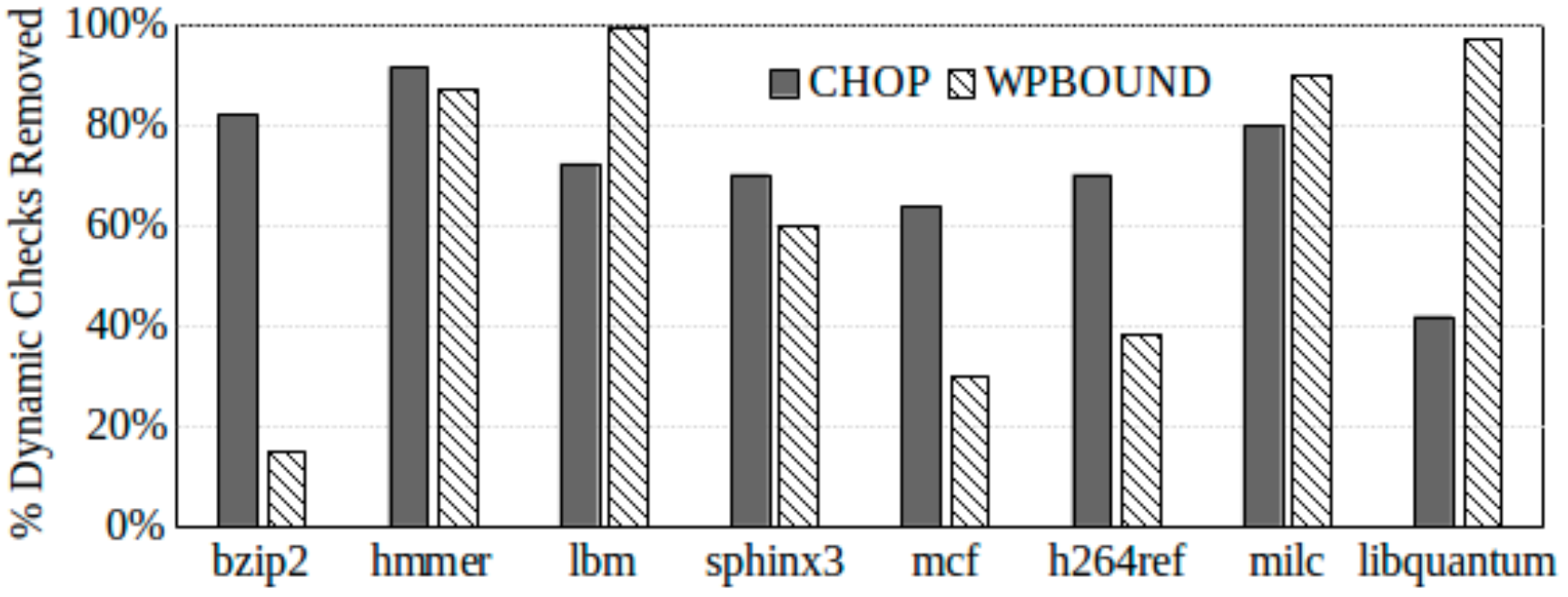}
		\caption{Comparison of bounds check removal ratio between C.O.S and WPBound}
		\label{fig:cmp-ratio}
	\end{center}
	\vspace{-.3in}
\end{figure}
Fig.~\ref{fig:cmp-runtime} shows the comparison of execution time overhead reduction over SoftBound between C.O.S and WPBound, normalized by the execution time of original applications without performing bounds checking. In particular, we measure the runtime overhead for each Check-HotSpot functions, before and after enabling CHOP. Due to the ability to bypass redundant bounds checking, C.O.S. achieves significant overhead reduction. The highest reduction achieved by CHOP is~\textit{hmmer}, with a 86.94\% execution time reduction compared to SoftBound. For~\textit{bzip2}, ~\textit{lbm},~\textit{sphinx3} and~\textit{thttpd}, SoftBound overheads are decreased from 39\% to 8\% , 55\% to 18\%, 31\% to 11\% and 66\% to 12\%. Overall, CHOP achieved an average execution time reduction of 66.31\% for Check-HotSpot functions, while WPBound achieves 37\% execution overhead reduction on SPEC benchmarks.

To illustrate CHOP's efficiency in bypassing redundant bounds checking, Table~\ref{Checks} and Table ~\ref{Comparision} show the number of total bounds checking required by SoftBound and the number of redundant checks bypassed by CHOP, along with rate of false positives reported in C.O.S. Basically, we observed no false positives in our experiments. In particular, Table~\ref{Comparision} compares the performance of the Union and Convex Hull approaches in terms of the number of runtime bounds checking being bypassed in two Check-HotSpot functions from thttpd. Since both Check-HotSpot functions have high-dimensional bound affecting variables, the conditions derived using Union and Convex Hull are different. More precisely, the safe regions constructed by Convex Hull dominates that of the Union approach, leading to higher bypassing ratio. CHOP successfully removes 82.12\% redundant bounds checking through Convex Hull approach comparing to 59.53\% by Union approach in \textit{thttpd::defang}. Also, we compare the number of bounds checking removed between CHOP and WPBound as shown in Fig.~\ref{fig:cmp-ratio}. On average, CHOP successfully bypassed 71.29\% runtime bounds checking for SPEC benchmarks while WPBound can achieve 64\%. We noted that for some of the testing programs such as {\it lbm, milc and libquantum}, WPBound outperforms our Convex Hull approaches. We examined the reasons and explained it in section~\ref{sec:case-others}.


\begin{table*}[h]
	\footnotesize
	\centering
	\bgroup
	\def\arraystretch{0.9}%
	\resizebox{1.0\textwidth}{!}{%
		\begin{tabular}{|c|c|c|c|c|c|}
			\hline
			
			\multicolumn{1}{|c|}{\textbf{Benchmark::Function Name}}          & \textbf{Total bounds checking}             & \textbf{Redundant checks bypassed}        & \textbf{False Positive}       \\ \hline
			\multicolumn{1}{|c|}{\textit{bzip2::generateMTFValues}} & 2,928,640                   & 1,440,891 (49.2\%)                      & 0 (0.0\%)    \\ \hline
			\multicolumn{1}{|c|}{\textit{bzip2::mainGtU}}     & 81,143,646                   & 81,136,304 (99.9\%)                     & 0 (0.0\%)         \\ \hline
			
			\multicolumn{1}{|c|}{\textit{bzip2::BZ2\_decompress}}      & 265,215                   & 196,259 (74.0\%)                      & 0 (0.0\%)           \\ \hline
			\multicolumn{1}{|c|}{\textit{hmmer::P7Viterbi}}      & 176,000,379                  & 124,960,267 (71.0\%)                       & 0 (0.0\%)        \\ \hline
			\multicolumn{1}{|c|}{\textit{lbm::LBM\_performStreamCollide}}      & 128277886                  & 128277886 (100.0\%)                      & 0 (0.0\%)        \\ \hline
			\multicolumn{1}{|c|}{\textit{sphinx3::vector\_gautbl\_eval\_logs3}}      & 2,779,295                  & 2,779,295 (100.0\%)                     & 0 (0.0\%)        \\ \hline
			\multicolumn{1}{|c|}{\textit{sphinx3::mgau\_eval}}      & 725,899,332                  & 725,899,332 (100.0\%)                      & 0 (0.0\%)        \\ \hline
			\multicolumn{1}{|c|}{\textit{sphinx3::subvq\_mgau\_shortlist}}      & 24,704                 &  4,471 (18.1\%)                      & 0 (0.0\%)        \\ \hline
			\multicolumn{1}{|c|}{\textit{thttpd::httpd\_parse\_request}}      & 9,990                 &  9,990  (100.0\%)                      & 0 (0.0\%)        \\ \hline
			\multicolumn{1}{|c|}{\textit{thttpd::handle\_newconnect}}      & 9,121                 &  7,300  (80.0\%)                      & 0 (0.0\%)        \\ \hline
		\end{tabular}		
	}
	\egroup
	\caption{ Number of bounds checking required by SoftBound and bypassed by CHOP in each Check-HotSpot function through Convex Hull Optimization}
	\label{Checks}
	\hspace{1mm}	
\end{table*}

\begin{table*}[h]
	\footnotesize
	\centering
	\bgroup
	\def\arraystretch{0.9}%
	\resizebox{1.0\textwidth}{!}{%
		\begin{tabular}{|c|c|c|c|c|}
			\hline
			\multirow{2}{*}{\textbf{Benchmark::Function name}} & \multirow{2}{*}{\textbf{Total bounds checking}} & \multicolumn{2}{c|}{\textbf{Redundant checks bypassed}} & \multirow{2}{*}{\textbf{False Postive}} \\ \cline{3-4}
			&                                               & \textbf{Union Approach}     & \textbf{Convex Hull}     &                                         \\ \hline
			\textit{thttpd::expand\_symlinks}                  & 4,621                                         & 3,828 (82.84\%)             & 4,025(87.10\%)           & 0 (0.0\%)                               \\ \hline
			\textit{thttpd::defang}                            & 4,122                                         & 2,452 (59.53\%)             & 3,382 (82.12\%)          & 0 (0.0\%)                               \\ \hline
		\end{tabular}
	}
	\egroup
	\caption{ Comparison of redundant bounds checking bypassed by CHOP between Union and Convex Hull Approaches}
	\label{Comparision}
	\hspace{1mm}	
\end{table*}
%

\subsection{Memory overheads and code instrumentation due to {CHOP}}

We note that CHOP's memory overhead for storing Knowledge Base and additional code instrumentation are modest, since the Knowledge Base mainly stores the constructed safe region, which can be fully represented by data points as described in section~\ref{subsec:sdKB}. The safe region can be stored as a two-dimensional matrix, with each row representing one facet of the convex hull.
Storing such matrix is light-weight. Our experiments show that the worst memory overhead is only 20KB for the benchmarks we evaluated and the maximum code size increased is less than 5\% of the check-hotspot functions. Across all applications, CHOP has an average 7.3KB memory overhead with an average 2.1\% code increase. Overall, we reduce memory overhead by roughly 50\% compared to SoftBound memory requirements.

\begin{table*}[h]
	\vspace{-0.2in}
	\footnotesize
	\centering
	\bgroup
	\def\arraystretch{0.7}%
	\resizebox{1.0\textwidth}{!}{%
		\begin{tabular}{c|c|c|c}
			\cline{2-3}
			\multicolumn{1}{l|}{}                                      & \multicolumn{2}{c|}{\textbf{Time spent in}}              & \multicolumn{1}{l}{}                                   \\ \hline
			\multicolumn{1}{|c|}{\textbf{Function name}}               & \textbf{SoftBound} & \textbf{C.O.S} & \multicolumn{1}{l|}{\textbf{Bounds Check Time Reduction}} \\ \hline
			\multicolumn{1}{|c|}{\textit{bzip2::generateMTFValues}}           & 77.21s    & 39.46s                     & \multicolumn{1}{c|}{48.89\%}                  \\ \hline
			\multicolumn{1}{|c|}{\textit{bzip2::mainGtU}}                     & 47.94s    & 6.26s                      & \multicolumn{1}{c|}{86.94\%}                  \\ \hline
			\multicolumn{1}{|c|}{\textit{bzip2::BZ2\_decompress}}             & 35.58s    & 9.10s                      & \multicolumn{1}{c|}{74.42\%}                  \\ \hline
			\multicolumn{1}{|c|}{\textit{hmmer::P7Viterbi}}                   & 3701.11s  & 812.91s                    & \multicolumn{1}{c|}{78.04\%}                  \\ \hline
			\multicolumn{1}{|c|}{\textit{lbm::LBM\_performStreamCollide}}   & 1201.79s        & 407.06s                        & \multicolumn{1}{c|}{66.13\%}                        \\ \hline
			\multicolumn{1}{|c|}{\textit{sphinx3::vector\_gautbl\_eval\_logs3}} & 1580.03s  & 318.10s                    & \multicolumn{1}{c|}{79.87\%}                  \\ \hline
			\multicolumn{1}{|c|}{\textit{sphinx3::mgau\_eval}}                  & 1582.68s  &473.10s                   & \multicolumn{1}{c|}{70.11\%}                  \\ \hline
			\multicolumn{1}{|c|}{\textit{sphinx3::subvq\_mgau\_shortlist}}      & 270.84s   & 221.81s                    & \multicolumn{1}{c|}{18.1\%}                   \\ \hline
			\multicolumn{1}{|c|}{\textit{thttpd::httpd\_parse\_request}}      
			& 151.2s &121.0s               & \multicolumn{1}{c|}{95.80\%}       \\ \hline
			\multicolumn{1}{|c|}{\textit{thttpd::handle\_newconnect}}      
			& 40.4s   &12.9s              & \multicolumn{1}{c|}{73.52\%}        \\ \hline
		\end{tabular}
		
	}	
	\egroup
	\caption{Execution time of Check-HotSpot functions for SoftBound and C.O.S, and the resulting bounds check time reduction.}
	\label{Speedup}
\end{table*}

\begin{table*}[h]
	\vspace{-0.2in}
	\footnotesize
	\centering
	\bgroup
	\def\arraystretch{0.7}%
	\resizebox{1.0\textwidth}{!}{%
		\begin{tabular}{c|c|c|c|cc}
			\cline{2-4}
			\multicolumn{1}{l|}{}                                         & \multicolumn{3}{c|}{\textbf{Time spent in}}                                          & \multicolumn{1}{l}{}                         & \multicolumn{1}{l}{}                      \\ \hline
			\multicolumn{1}{|c|}{\multirow{2}{*}{\textbf{Function name}}} & \multirow{2}{*}{\textbf{SoftBound}} & \multicolumn{2}{c|}{\textbf{C.O.S}}            & \multicolumn{2}{c|}{\textbf{Bounds Check Time Reduction}}                                   \\ \cline{3-6} 
			\multicolumn{1}{|c|}{}                                        &                                     & \textbf{Union Approach} & \textbf{Convex Hull} & \multicolumn{1}{c|}{\textbf{Union Approach}} & \multicolumn{1}{c|}{\textbf{Convex Hull}} \\ \hline
			\multicolumn{1}{|c|}{\textit{thttpd::expand\_symlinks}}       & 19.6s                               & 5.30s                   & 3.70s                & \multicolumn{1}{c|}{72.91\%}                 & \multicolumn{1}{c|}{81.12\%}              \\ \hline
			\multicolumn{1}{|c|}{\textit{thttpd::defang}}                 & 5.37s                             & 2.44s                 & 1.21s              & \multicolumn{1}{c|}{54.56\%}                 & \multicolumn{1}{c|}{77.46\%}              \\ \hline
		\end{tabular}
	}	
	\egroup
	\caption{Comparison of Execution time of Check-HotSpot functions under SoftBound and CHOP between Union and Convex Hull Approaches.}
	\label{Speedup_Comparision}
	\vspace{-0.2in}
\end{table*}

\subsection{Execution time Overhead caused by CHOP}\label{extra}

CHOP bypasses redundant bounds checking by program profiling and safe region queries. We perform a comparison on the breakdown of execution time and found that the average overhead of a SoftBound check is 0.035s while the average of \textit{CHECK\_CHOP()}(together with trip count) overhead is 0.0019s.

\subsection{Case Studies}\label{CaseStudies}

In this section, we present detailed experimental results on the effectiveness of bounds check bypassing for both Union and Convex Hull approaches. Note that we only present the results from Convex Hull approach if the removal conditions are the same with Union approach. For those functions with different removal conditions, we further compare the performance between these two approaches. 
We also summarize the SoftBound overhead before and after redundant bounds checking bypassing using CHOP's Convex Hull approach, as well as the resulting execution time reduction, as shown in Table ~\ref{Speedup} and Table~\ref{Speedup_Comparision}.

\subsubsection{bzip2}
~\textit{bzip2} is a compression program to compress and decompress inputs files, such as TIFF image and source tar file. We identified three Check-HotSpot functions in~\textit{bzip2}: \textit{bzip2::mainGtU}, \textit{bzip2::generateMTFValues} and \textit{bzip2::BZ2\_decompress}. We use the function~\textit{bzip2::mainGtU} as an example to illustrate  how CHOP avoids redundant runtime checks in detail. Using Dependency Graph Construction from section~\ref{subsec:sdDG}, we fist identify \(nblock\), \(i_1\), and \(i_2\) as the pointer-affecting variables in~\textit{bzip2::mainGtU} function. For each execution, the Profile-guided Inference module computes and updates the Safe Region, which results in the following (sufficient) conditions for identifying redundant bounds checking in \textit{bzip2::mainGtU}: \[ nblock > i_1 +20\ or\ nblock > i_2 +20. \] Therefore, every time this Check-HotSpot function is called, CHOP will trigger runtime check bypassing if the inputs variables' values \(nblock\), \(i_1\), and \(i_2\) satisfy the conditions above. Because its Safe Region is one dimensional, the calculation of check bypassing conditions is indeed simple and only requires program input variables \(i_1\) and \(i_2\) (that are the array indexes) and \(nblock\) (that is the input array length). If satisfied, the conditions result in complete removal of bounds checking in function \textit{bzip2::mainGtU}. Our evaluation shows that it is able to eliminate over 99\% redundant checks.

For the second Check-HotSpot function~\textit{bzip2::generateMTFValue}, CHOP determines that array bounds checking could be bypassed for five different target pointers inside of the function. In this case, CHOP optimization reduces execution time overhead from 77.21s to 39.46s. We can see this number is near proportional to the number of checks removed by CHOP in Table~\ref{Checks}. 

The last Check-HotSpot function~\textit{bzip2::BZ2\_decompress} has over 200 lines of code. Similar to function~\textit{bzip2::generateMTFValue}, it also has five target pointers that share similar bounds check conditions. CHOP deploys a function-level removal for function~\textit{bzip2::BZ2\_decompress}. As we can see from Table~\ref{Speedup}, CHOP obtained a 74.42\% execution time reduction, which is consistent with the number of redundant bounds checking identified by CHOP presenting in Table~\ref{Checks}.

\subsubsection{hmmer}
\textit{hmmer} is a program for searching DNA gene sequences, which implements the~\textit{Profile Hidden Markov Models} algorithms and involves many double pointer operations. There is only one Check-HotSpot function, \textit{P7Viterbi}, which contributes over 98\% of SoftBound overhead. 

Inside of the function \textit{hmmer::P7Viterbi}, there are four double pointers: \textit{xmx, mmx, imx} and \textit{dmx}. To cope with double pointers in this function, we consider the row and column array bounds separately and construct a Safe Region for each dimension. Besides the 4 double pointers, we also construct conditions for check bypassing for another 14 pointers. The SoftBound overhead is significantly reduced from 3701.11s to 812.91s, rendering an execution time reduction of 78.94\% .

\subsubsection{lbm}
lbm is developed to simulate incompressible fluids in 3D and has only 1 Check-HotSpot function: \textit{lbm::LBM\_performStreamCollide}. The function has two pointers (as input variables) with pointer assignments and dereferencing inside of a for-loop. It suffers from high bounds check overhead in SoftBound, because pointer dereferencing occurs repeatedly inside the {\it for} loop, which results in frequent bounds checking. On the other hand, CHOP obtains the bounds check bypassing conditions for each pointer dereferencing. By further combining these conditions, we observed that the pointer dereferencing always access the same memory address, implying that it is always safe to remove all bounds checking in future executions after bounds checking are performed in the first execution. Thus, CHOP is able to bypass 100\% redundant checks which leads to an execution time reduction of 66.13\% .

\subsubsection{sphinx3}
Sphinx3 is a well-known speech recognition system, it is the third version of sphinx derived from sphinx2~\cite{huang1993sphinx}. 
The first Check-HotSpot function in Sphinx3 is \textit{sphinx3::vector\_gautbl\_eval\_logs3} and there are four target pointers inside this function. Due to the identical access pattern, once we derive the bounds check bypassing conditions for one single pointer, it also applies to all the others, allowing all redundant checks to be bypassed simultaneously in this function. As shows in Table~\ref{Checks}, CHOP bypass 100\% of redundant checks with a resulting execution time of 318.10s, which achieves the optimal performance.

We observed a similar behavior for the second Check-HotSpot function in Sphinx3: ~\textit{sphinx3::mgau\_eval}. CHOP achieves 100\% redundant bounds check bypassing with an execution time reduction of 70.11\%, from 1582.68s in SoftBound to 473.10s after CHOP's redundant bounds check bypassing.

The last function~\textit{sphinx3::subvq\_mgau\_shortlist} also has four target pointers. CHOP optimized SoftBound incurs an overhead of 221.81s, when the original SoftBound overhead is 270.84s. For this function, CHOP only removed 18.1\% redundant checks, which is the lowest in our evaluations.The reason is that the pointer \textit{vqdist} inside this function has indirect memory access, that its index variable is another pointer ~\textit{map}. The dependency graph we constructed cannot represent the indirect memory access relation between these two pointers. Since CHOP is not able to handle pointers that perform indirect memory accesses, it only removes about 18\% of the bounds checking. We note that capturing such memory access dependencies is possible via extending our dependency graph to model complete memory referencing relations. We will consider this as future work. 

\subsubsection{thttpd}\label{thttpd}
thttpd is a light-weight HTTP server. A buffer overflow vulnerability has been exploited in thttpd 2.2x versions within a function called \textit{thttpd::defang()}, which 
replaces the special characters "$<$" and "$>$" in url \textit{str} with "\&lt;" and "\&gt;" respectively, then outputs the new string as \textit{dfstr}. The function \textit{thttpd::defang()} can cause an buffer overflow when the length of the new string is larger than 1000. To evaluate the performance of CHOP, we generate 1000 thttpd requests with random URL containing such special characters, and submit the requests to a host thttpd server to trigger \textit{thttpd::defang()}.

CHOP's dependency analysis reveals that the pointer \textit{dfstr} has two pointer-affecting variables \textit{s} and \textit{n}, where $s$ denotes the total number of special characters in the input url and $n$ denotes the length of the input url. The bound of \textit{dfstr} is a constant value of 1000. To illustrate the safe region construction using Convex Hull and Union approaches, we consider the first two input data points from two executions: $(s_1,n_1)=(1,855)$ and $(s_2,n_2)=(16,60)$.
It is easy to see that based on the two input data points, the safe region built by Union approach is $SR_{union}=SR(1) \cup SR(2)$, where $ SR(1)$ = $\{(s, n): 0 \le s \le 1, 0 \le n \le 855\}$ and $\{(s, n): 0 \le s \le 16, 0 \le n \le 60\} $ are derived from the two data points, respectively. On the other hand, our Convex Hull approach extends $SR_{union}$ into a convex hull with linear boundaries. It results in a safe region $SR_{convex}$ = $\{(s, n): 0 \le s \le 16, 0 \le n \le 855, 53*s + n \le 908\}$. As the ground truth, manual analysis shows that the sufficient and necessary condition for safe bounds checking removal is given by an optimal safe region $SR_{opt}=\{(s,n): \ 3*s + n + 1< = 1000\}$. While more input data points are needed to achieve $SR_{opt}$, the safe region constructed using Convex Hull approach significantly improves that of Union approach, under the assumption that the pointer-affecting variables are linearly related to target pointers and array bounds.
With 10 randomly generated thttpd requests, we show in Table~\ref{Comparision} that CHOP's Convex Hull approach successfully bypasses 82.12\% redundant bounds checking with 0 false positive, whereas Union approach in Table~\ref{Comparision} and Table~\ref{Speedup_Comparision} is able to bypass 59.53\% redundant bounds checking.
Furthermore, Union approach has 14.89\% runtime overhead, compared to 21.90\% for Convex Hull approach. This is because the runtime redundant checks identification in Union approach only requires (component-wise) comparison of an input vector and corner points. On the other hand, Convex Hull approach needs to check whether the input vector falls inside a convex safe region, which requires checking all linear boundary conditions and results in higher runtime overhead. Additionally, CHOP bypass 100\% runtime bounds checking in function \textit{thttpd::httpd\_parse\_request} and 73.52\% checks in function \textit{thttpd::handle\_newconnect}.


\subsubsection{Others}\label{sec:case-others}
We also have some interesting results due to imperfect profiling. For example, in the application \textit{libquantum}, the Check-HotSpot functions are small and we only identify 3 linear assignment and 3 non-linear assignment of related pointers. Thus the convex hull approach will be ineffective. As shown in Fig.~\ref{fig:cmp-ratio}, we can only remove less than 50\% runtime checks when WPBound can remove much more. Similarly, in the application \textit{milc}, a large portion of the pointer-related assignments have multiplication and convex hull-based approach cannot deal with non-linear relationships among pointer-affecting variables. Additionally, in the application \textit{lbm}, due to the intensive use of macros for preprocessor, our static code parsing tool cannot recognize the complete function bodies. As a result, WPBound outperforms CHOP on the ratio of dynamic bounds check reduction and execution overhead reduction.

\section{Discussion}
We discuss some limitations of our current design in this section.

\textbf{CHOP currently is built to optimize SoftBound.} Since CHOP is based on SoftBound (which is built upon LLVM), it currently only works on programs compiled with Clang. We note that this research framework can be easily extended to other environments with engineering effort.

\textbf{The test programs need to be profiled.} In order to find the Check-HotSpot functions, we have to compile the programs with and without SoftBound. Also, test programs are executed several times to initialize the Safe Region. However, this Safe Region initialization is a one-time effort and will not need to be repeated.

\textbf{The performance of convex hull approach could drop when the dimension of data points gets too high.} As shown in our evaluation, the time overhead of convex hull approach could be higher than that of union approach. The higher the dimension of convex hull is, the more facets the safe region has and the more comparisons it will need to decide if a new data point is in the convex hull. However, since the query to convex hull only needs to occur once per function (in function-level check bypassing), it still reduce the runtime overhead of SoftBound, and it will provide high ratio of check bypassing compared against Union approach.

\textbf{The convex hull approach is built on an assumption.} In order to build the convex hull as Safe Region, CHOP will require the relationship among the pointer-affecting variables to be linear. We show the ratio of linear assignments from the applications in Table~\ref{tab:linear}. The examples of linear and non-linear assignments related to pointers include but are not limited to the following:

\textbf{Linear assignments:}
\begin{itemize}
	\item cand\_x=offset\_x+spiral\_search\_x[pos]
	\item v=s$\rightarrow$selectorMtf[i]
\end{itemize}

\textbf{Non-linear assignments:}
\begin{itemize}
	\item i1=sizeof(block)/sizeof(block[0])
	\item int max\_pos=(2$\cdot$search\_range+1)$\cdot$(2$\cdot$search\_range+1);
\end{itemize}

We observe that most applications from SPEC2006 have high ratios of linear pointer-related assignments. The ratio of non-linear assignments is higher in some applications such as \textit{lbm}, where we found that it intensively uses macros. The assignments with macros and functions calls will be classified as non-linear assignments by our algorithm. In the function \textit{srcGrid} from \textit{lbm}, the macro \textit{SRC\_ST(srcGrid)} performs certain calculation based on the grid index and value from \textit{srcGrid}.

\begin{table}[]
	\centering
	\begin{tabular}{|c|c|c|c|}
		
		\hline
		\multirow{2}{*}{Application} & \multicolumn{2}{c|}{Related Assignments} & \multirow{2}{*}{\begin{tabular}[c]{@{}c@{}}Ratio of\\ Linear Assignments(\%)\end{tabular}} \\ \cline{2-3}
		& Linear            & Non-linear           &                                                                                            \\ \hline
		Bzip2                        & 1174              & 60                   & 95.1                                                                                       \\ \hline
		lbm                          & 58                & 40                   & 59.2                                                                                       \\ \hline
		sphinx3                      & 342               & 45                   & 88.4                                                                                       \\ \hline
		hmmer                        & 687               & 6                    & 99.1                                                                                       \\ \hline
		h264ref                      & 298               & 11                   & 96.4                                                                                       \\ \hline
		libquantum                   & 3                 & 3                    & 50                                                                                         \\ \hline
		milc                         & 162               & 78                   & 67.5                                                                                       \\ \hline
		Total                        & 2724              & 243                  & 91.8                                                                                       \\ \hline
	\end{tabular}
	\caption{Number of linear and non-linear assignments on Check-HotSpot functions from SPEC2006 applications.}
	\label{tab:linear}
	\vspace{-0.2in}
\end{table}

\textbf{CHOP does not perform bounds check bypassing beyond the function level}. We will consider the inter-procedural analysis in our future works to detect redundant bounds check across the whole program~\cite{jensen2010interprocedural,Sui:2016:SIS:2892208.2892235}.
\section{Related Work}
\label{sec:related}

C and C++ are unsafe programming languages and plenty of efforts have been made towards securing the memory usages of C/C++ programs~\cite{song2018sok}. Memory safety and usage analysis have been studied widely~\cite{chen_jetc14, venkataramani2011deft, chen2012repram, oh2011lime}. Some existing works try to find memory-related vulnerabilities in source code or IR (during compilation) by direct static analysis~\cite{ganapathy2003buffer,splint,Chucky_CCS_2013,Pscan,dor2003cssv}. For example, splint~\cite{splint} utilizes light-weight static analysis to detect bugs in annotated programs, by checking if the properties of the program are consistent with the annotation. 
Yamaguchi et. al.~\cite{Chucky_CCS_2013,AST_ACSA_2012} use machine learning techniques to identify the similarity of code patterns to facilitate discovery of vulnerabilities. Clone-Hunter~\cite{xue2018clone} and Clone-Slicer~\cite{xue2018clone} aim to detect code clones in program binaries for accelerated bounds check removal. Nurit et. al.~\cite{dor2003cssv} target string-related bugs in C program with a conservative pointer analysis using abstracted constraint expressions for pointer operations similar to ABCD~\cite{ABCD_SIGPLAN_2000}.

While such static techniques can be quite scalable and low-cost (with no impact to runtime overhead), it often result in incomplete and inaccurate analysis. Pure static code analysis could suffer from undecidable factors that can only be known during runtime. Hence, some works build safety rules based on source code or compile time analysis, then enforce such rules during runtime to prevent undesired behaviors such as out-of-bound accesses or unauthorized control flow transfers~\cite{CCured_SIGPLAN_2002, shen2006tradeoffs, WIT_2008,li2016sarre,SoftBound_PLDI_2009,chen2017damgate, memtracker, doudalis2012effective}. Necula et. al. propose CCured~\cite{CCured_SIGPLAN_2002}, which is a type safe system that leverages rule-based type inference to determine ``safe” pointers. It categories pointers into three types \(\{safe, seq, dynamic\}\) then applies different checking rules for them. Akritidis et.al~\cite{WIT_2008} perform points-to analysis during compile time to mark the objects that can be written to, then prevent writes to unmarked objects during runtime. They also enforce the control transfers according to a pre-built CFG. As mentioned previously, SoftBound works in a similar way. It is built upon LLVM to track pointer metadata and perform bounds check when pointers are dereferenced. 

Such approaches typically instrument the program to insert customized checks which will be activated during runtime. Hence, the performance could be a serious issue due to the additional checks and metadata operations. Techniques that remove redundant checks to boost the runtime performance have been studied~\cite{ABCD_SIGPLAN_2000,wpcheck,statsym_dsn,Java_Hotspot,SIMBER}. WPBound statically analyzes the ranges of pointer values inside loops. During runtime, it compares such ranges with the actual runtime values obtained from SoftBound to determine if the bounds check can be removed from the loops. W{\"u}rthinger et. al.~\cite{Java_Hotspot} eliminate the bounds check based on the static analysis upon JIT IR during program compiling and keep a condition for every instruction that computes an integer value. Different from these works, SIMBER~\cite{SIMBER} and CHOP utilizes historical runtime data and profile-guided inferences to perform bounds check. However, SIMBER only uses union approach to construct safe region for bounds check bypassing and suffers from low bounds check bypassing rate in large-scale applications with multiple pointer-affecting variables.

\vspace{-.1in}
\section{Conclusion}
\label{sec:concl}

In this paper, we propose CHOP, a framework integrates profile-guided inference with spatial memory safety checks to perform redundant bounds check bypassing through Convex Hull Optimization. CHOP targets frequently executed functions instrumented by SoftBound that have redundant bounds checking. Our experimental evaluation on two different sets of real-world benchmarks shows that CHOP can obtain an average 71.29\% reduction in array bounds checking and 66.31\% reduction in bounds check execution time. 

\vspace{-.15in}
\section*{Acknowledgments}
This work was supported by the US Office of Naval Research
Awards N00014-17-1-2786 and N00014-15-1-2210. 
\bibliographystyle{IEEEtran}
\vspace{-0.3in}
\bibliography{chop} 	

\begin{thebibliography}{10}
\providecommand{\url}[1]{#1}
\csname url@samestyle\endcsname
\providecommand{\newblock}{\relax}
\providecommand{\bibinfo}[2]{#2}
\providecommand{\BIBentrySTDinterwordspacing}{\spaceskip=0pt\relax}
\providecommand{\BIBentryALTinterwordstretchfactor}{4}
\providecommand{\BIBentryALTinterwordspacing}{\spaceskip=\fontdimen2\font plus
\BIBentryALTinterwordstretchfactor\fontdimen3\font minus
  \fontdimen4\font\relax}
\providecommand{\BIBforeignlanguage}[2]{{%
\expandafter\ifx\csname l@#1\endcsname\relax
\typeout{** WARNING: IEEEtran.bst: No hyphenation pattern has been}%
\typeout{** loaded for the language `#1'. Using the pattern for}%
\typeout{** the default language instead.}%
\else
\language=\csname l@#1\endcsname
\fi
#2}}
\providecommand{\BIBdecl}{\relax}
\BIBdecl

\bibitem{glibc}
Google, ``Cve-2015-7547: glibc getaddrinfo stack-based buffer overflow,''
  \url{https://security.googleblog.com/2016/02/cve-2015-7547-glibc-getaddrinfo-stack.html}.

\bibitem{Cisco}
Cisco, ``Cve-2016-1287: Cisco asa software ikev1 and ikev2 buffer overflow
  vulnerability,''
  \url{https://tools.cisco.com/security/center/content/CiscoSecurityAdvisory/cisco-sa-20160210-asa-ike}.

\bibitem{SoftBound_PLDI_2009}
S.~Nagarakatte, J.~Zhao, M.~M. Martin, and S.~Zdancewic, ``Softbound: Highly
  compatible and complete spatial memory safety for c,'' vol.~44, no.~6.\hskip
  1em plus 0.5em minus 0.4em\relax ACM, 2009, pp. 245--258.

\bibitem{ABCD_SIGPLAN_2000}
R.~Bod{\'\i}k, R.~Gupta, and V.~Sarkar, ``Abcd: eliminating array bounds
  checking on demand,'' in \emph{ACM SIGPLAN Notices}, vol.~35, no.~5.\hskip
  1em plus 0.5em minus 0.4em\relax ACM, 2000, pp. 321--333.

\bibitem{wpcheck}
Y.~Sui, D.~Ye, Y.~Su, and J.~Xue, ``Eliminating redundant bounds checking in
  dynamic buffer overflow detection using weakest preconditions,'' \emph{IEEE
  Transactions on Reliability}, vol.~65, no.~4, 2016.

\bibitem{SIMBER}
H.~Xue, Y.~Chen, F.~Yao, Y.~Li, T.~Lan, and G.~Venkataramani, ``Simber:
  Eliminating redundant memory bound checks via statistical inference,'' in
  \emph{Proceedings of the IFIP International Conference on Computer
  Security}.\hskip 1em plus 0.5em minus 0.4em\relax Springer, 2017.

\bibitem{spec}
``Spec cpu 2006,'' \url{https://www.spec.org/cpu2006/}.

\bibitem{quickhull}
\BIBentryALTinterwordspacing
C.~B. Barber, D.~P. Dobkin, and H.~Huhdanpaa, ``The quickhull algorithm for
  convex hulls,'' \emph{ACM Trans. Math. Softw.}, vol.~22, no.~4, pp. 469--483,
  Dec. 1996. [Online]. Available:
  \url{http://doi.acm.org/10.1145/235815.235821}
\BIBentrySTDinterwordspacing

\bibitem{perf}
A.~C. de~Melo, ``The new linux’perf’tools,'' in \emph{Slides from Linux
  Kongress}, vol.~18, 2010.

\bibitem{Pscan}
D.~Alan, ``Pscan: A limited problem scanner for c source files,''
  \url{http://deployingradius.com/pscan/}.

\bibitem{splint}
D.~Evans and D.~Larochelle, ``Improving security using extensible lightweight
  static analysis,'' \emph{IEEE software}, vol.~19, no.~1, 2002.

\bibitem{PR-Miner}
Z.~Li and Y.~Zhou, ``Pr-miner: automatically extracting implicit programming
  rules and detecting violations in large software code,'' in \emph{ACM SIGSOFT
  Software Engineering Notes}, vol.~30, no.~5.\hskip 1em plus 0.5em minus
  0.4em\relax ACM, 2005.

\bibitem{srivastava2011security}
V.~Srivastava, M.~D. Bond, K.~S. McKinley, and V.~Shmatikov, ``A security
  policy oracle: Detecting security holes using multiple api implementations,''
  \emph{ACM SIGPLAN Notices}, vol.~46, no.~6, 2011.

\bibitem{AST_ACSA_2012}
F.~Yamaguchi, M.~Lottmann, and K.~Rieck, ``Generalized vulnerability
  extrapolation using abstract syntax trees,'' in \emph{Annual Computer
  Security Applications Conference}, 2012.

\bibitem{joern}
F.~Yamaguchi, ``Joern: A robust code analysis platform for c/c++,''
  \url{http://www.mlsec.org/joern/}.

\bibitem{yamaguchi2014modeling}
F.~Yamaguchi, N.~Golde, D.~Arp, and K.~Rieck, ``Modeling and discovering
  vulnerabilities with code property graphs,'' in \emph{2014 IEEE Symposium on
  Security and Privacy}.\hskip 1em plus 0.5em minus 0.4em\relax IEEE, 2014, pp.
  590--604.

\bibitem{yamaguchi2015automatic}
F.~Yamaguchi, A.~Maier, H.~Gascon, and K.~Rieck, ``Automatic inference of
  search patterns for taint-style vulnerabilities,'' in \emph{2015 IEEE
  Symposium on Security and Privacy}.\hskip 1em plus 0.5em minus 0.4em\relax
  IEEE, 2015, pp. 797--812.

\bibitem{owens2010sqlite}
M.~Owens and G.~Allen, \emph{SQLite}.\hskip 1em plus 0.5em minus 0.4em\relax
  Springer, 2010.

\bibitem{mcnally2012fuzzing}
R.~McNally, K.~Yiu, D.~Grove, and D.~Gerhardy, ``Fuzzing: the state of the
  art,'' DTIC Document, Tech. Rep., 2012.

\bibitem{woo2013scheduling}
M.~Woo, S.~K. Cha, S.~Gottlieb, and D.~Brumley, ``Scheduling black-box
  mutational fuzzing,'' in \emph{ACM SIGSAC conference on Computer \&
  communications security}, 2013.

\bibitem{huang1993sphinx}
X.~Huang, F.~Alleva, H.-W. Hon, M.-Y. Hwang, K.-F. Lee, and R.~Rosenfeld, ``The
  sphinx-ii speech recognition system: an overview,'' \emph{Computer Speech \&
  Language}, vol.~7, no.~2, pp. 137--148, 1993.

\bibitem{jensen2010interprocedural}
S.~H. Jensen, A.~M{\o}ller, and P.~Thiemann, ``Interprocedural analysis with
  lazy propagation,'' in \emph{International Static Analysis Symposium}.\hskip
  1em plus 0.5em minus 0.4em\relax Springer, 2010, pp. 320--339.

\bibitem{Sui:2016:SIS:2892208.2892235}
\BIBentryALTinterwordspacing
Y.~Sui and J.~Xue, ``Svf: Interprocedural static value-flow analysis in llvm,''
  in \emph{Proceedings of the 25th International Conference on Compiler
  Construction}, ser. CC 2016.\hskip 1em plus 0.5em minus 0.4em\relax New York,
  NY, USA: ACM, 2016, pp. 265--266. [Online]. Available:
  \url{http://doi.acm.org/10.1145/2892208.2892235}
\BIBentrySTDinterwordspacing

\bibitem{song2018sok}
D.~Song, J.~Lettner, P.~Rajasekaran, Y.~Na, S.~Volckaert, P.~Larsen, and
  M.~Franz, ``Sok: sanitizing for security,'' \emph{arXiv preprint
  arXiv:1806.04355}, 2018.

\bibitem{chen_jetc14}
J.~Chen, G.~Venkataramani, and H.~H. Huang, ``Exploring dynamic redundancy to
  resuscitate faulty pcm blocks,'' \emph{J. Emerg. Technol. Comput. Syst.},
  vol.~10, no.~4, Jun. 2014.

\bibitem{venkataramani2011deft}
G.~Venkataramani, C.~J. Hughes, S.~Kumar, and M.~Prvulovic, ``Deft: Design
  space exploration for on-the-fly detection of coherence misses,'' \emph{ACM
  Transactions on Architecture and Code Optimization (TACO)}, vol.~8, no.~2,
  p.~8, 2011.

\bibitem{chen2012repram}
J.~Chen, G.~Venkataramani, and H.~H. Huang, ``Repram: Re-cycling pram faulty
  blocks for extended lifetime,'' in \emph{IEEE/IFIP International Conference
  on Dependable Systems and Networks}, 2012.

\bibitem{oh2011lime}
J.~Oh, C.~J. Hughes, G.~Venkataramani, and M.~Prvulovic, ``Lime: A framework
  for debugging load imbalance in multi-threaded execution,'' in \emph{33rd
  International Conference on Software Engineering}.\hskip 1em plus 0.5em minus
  0.4em\relax ACM, 2011.

\bibitem{ganapathy2003buffer}
V.~Ganapathy, S.~Jha, D.~Chandler, D.~Melski, and D.~Vitek, ``Buffer overrun
  detection using linear programming and static analysis,'' in
  \emph{Proceedings of the 10th ACM conference on Computer and communications
  security}.\hskip 1em plus 0.5em minus 0.4em\relax ACM, 2003, pp. 345--354.

\bibitem{Chucky_CCS_2013}
F.~Yamaguchi, C.~Wressnegger, H.~Gascon, and K.~Rieck, ``Chucky: Exposing
  missing checks in source code for vulnerability discovery,'' in \emph{ACM
  Conference on Computer \& Communications security}, 2013.

\bibitem{dor2003cssv}
N.~Dor, M.~Rodeh, and M.~Sagiv, ``Cssv: Towards a realistic tool for statically
  detecting all buffer overflows in c,'' in \emph{ACM Sigplan Notices},
  vol.~38, no.~5.\hskip 1em plus 0.5em minus 0.4em\relax ACM, 2003, pp.
  155--167.

\bibitem{xue2018clone}
H.~Xue, G.~Venkataramani, and T.~Lan, ``Clone-hunter: accelerated bound checks
  elimination via binary code clone detection,'' in \emph{Proceedings of the
  2nd ACM SIGPLAN International Workshop on Machine Learning and Programming
  Languages}.\hskip 1em plus 0.5em minus 0.4em\relax ACM, 2018, pp. 11--19.

\bibitem{CCured_SIGPLAN_2002}
G.~C. Necula, S.~McPeak, and W.~Weimer, ``Ccured: Type-safe retrofitting of
  legacy code,'' vol.~37, no.~1, pp. 128--139, 2002.

\bibitem{shen2006tradeoffs}
J.~Shen, G.~Venkataramani, and M.~Prvulovic, ``Tradeoffs in fine-grained heap
  memory protection,'' in \emph{Proceedings of the 1st workshop on
  Architectural and system support for improving software dependability}.\hskip
  1em plus 0.5em minus 0.4em\relax ACM, 2006, pp. 52--57.

\bibitem{WIT_2008}
P.~Akritidis, C.~Cadar, C.~Raiciu, M.~Costa, and M.~Castro, ``Preventing memory
  error exploits with wit,'' in \emph{2008 IEEE Symposium on Security and
  Privacy (sp 2008)}.\hskip 1em plus 0.5em minus 0.4em\relax IEEE, 2008, pp.
  263--277.

\bibitem{li2016sarre}
Y.~Li, F.~Yao, T.~Lan, and G.~Venkataramani, ``Sarre: semantics-aware rule
  recommendation and enforcement for event paths on android,'' \emph{IEEE Tran.
  on Information Forensics and Security}, vol.~11, no.~12, 2016.

\bibitem{chen2017damgate}
Y.~Chen, T.~Lan, and G.~Venkataramani, ``Damgate: Dynamic adaptive
  multi-feature gating in program binaries,'' in \emph{Proceedings of the 2017
  Workshop on Forming an Ecosystem Around Software Transformation}.\hskip 1em
  plus 0.5em minus 0.4em\relax ACM, 2017, pp. 23--29.

\bibitem{memtracker}
G.~Venkataramani, I.~Doudalis, Y.~Solihin, and M.~Prvulovic, ``Memtracker: An
  accelerator for memory debugging and monitoring,'' \emph{ACM Transactions on
  Architecture and Code Optimization (TACO)}, vol.~6, no.~2, p.~5, 2009.

\bibitem{doudalis2012effective}
I.~Doudalis, J.~Clause, G.~Venkataramani, M.~Prvulovic, and A.~Orso,
  ``Effective and efficient memory protection using dynamic tainting,''
  \emph{IEEE Transactions on Computers}, vol.~61, no.~1, pp. 87--100, 2012.

\bibitem{statsym_dsn}
F.~Yao, Y.~Li, Y.~Chen, H.~Xue, G.~Venkataramani, and T.~Lan, ``{StatSym:
  Vulnerable Path Discovery through Statistics-guided Symbolic Execution},'' in
  \emph{IEEE/IFIP International Conference on Dependable Systems and Networks},
  2017.

\bibitem{Java_Hotspot}
T.~W{\"u}rthinger, C.~Wimmer, and H.~M{\"o}ssenb{\"o}ck, ``Array bounds check
  elimination for the java hotspot client compiler,'' in \emph{Proceedings of
  the 5th international symposium on Principles and practice of programming in
  Java}.\hskip 1em plus 0.5em minus 0.4em\relax ACM, 2007, pp. 125--133.

\end{thebibliography}

\end{document}